\renewcommand{\AA}{\mathcal{A}}
\newcommand{\R}{\mathcal{R}}
\newcommand{\RR}{\mathbb{R}}
\newcommand{\Z}{\mathcal{Z}}
\newcommand{\C}{\mathcal{C}}
\newcommand{\axiom}[1]{\texttt{#1}\xspace}
\newcommand{\ADD}{\axiom{Additivity}}
\newcommand{\DUM}{\axiom{Dummy}}
\newcommand{\DUMP}{\axiom{Dummy'}}
\newcommand{\SI}{\axiom{Scale Invariance}}
\newcommand{\ASI}{\axiom{Affine Scale Invariance}}
\newcommand{\MON}{\axiom{Monotonicity}}
\newcommand{\CON}{\axiom{Conditional Nonnegativity}}
\newcommand{\ANON}{\axiom{Anonymity}}
\DeclareMathOperator{\spann}{span}
\theoremstyle{definition}
\numberwithin{equation}{section}
\newtheorem{theorem}{Theorem}[section]
\newtheorem{define}[theorem]{Definition}
\newtheorem{lemma}[theorem]{Lemma}
\newtheorem{cor}[theorem]{Corollary}
\newtheorem{example}[theorem]{Example}
\newtheorem*{remark}{Remark}
\begin{document}

\title{Axiomatic Attribution for Multilinear Functions}

\author{Yi Sun}
\address{Churchill College, Cambridge CB3 0DS, United Kingdom}
\email{yisun@mit.edu}

\author{Mukund Sundararajan}
\address{Google Research\\ 1600 Amphitheatre Parkway\\Mountain View, CA 94043}
\email{mukunds@google.com}

\date{\today}

\thanks{The authors would like to thank the anonymous referees for helpful comments which improved the exposition and in particular for pointing out the reference \cite{Owen}.}

\begin{abstract}
We study the attribution problem, that is, the problem of attributing a change in the value of a  characteristic function $f$ to its independent variables.  We make three contributions.
First, we propose a formalization of the problem based on a standard cost sharing model.
Second, we show that there is a unique attribution method that satisfies \DUM, \ADD, \CON, \ASI, and \ANON for all characteristic functions that are the sum of a multilinear function and an additive function. We term this the \textit{Aumann-Shapley-Shubik} method. Conversely, we show that such a uniqueness result does not hold for characteristic functions outside this class. Third, we study multilinear characteristic functions in detail; we describe a computationally efficient implementation of the Aumann-Shapley-Shubik method and discuss practical applications to pay-per-click advertising and portfolio analysis.
\end{abstract}

\maketitle

\section{Introduction}
\subsection{The Attribution Problem} \label{theproblem}

Consider a function $f(r_1, \ldots, r_n)$ of several variables $r_1, \ldots, r_n$.  Given a change in the values of these variables, we ask what portion of the overall change is due to the change in each variable $r_i$. In particular, we would like to divide the responsibility for the overall change among the variables in an axiomatic way.  We term such problems \emph{attribution problems} and the responsibilities assigned \emph{attributions}.  The attribution to the $i^\text{th}$ variable can be more interesting than simply the change $s_i - r_i$ in the variable because the relationship between the magnitude of the change in a variable and the impact it has on $f$ depends on the form of $f$. For instance, a tiny change in a variable could have a huge impact on the the value of the function.
 
Formally, we are given a real-valued characteristic function $f: \RR^n \to \RR$ of $n$ variables and initial and final values $r_i$ and $s_i$ for the independent variables.  Here, the function $f$ is deterministic, not learned from data, and the values of $r$ and $s$ are known exactly and are not estimates in any sense. Our objective is to find attributions $z_1(r, s, f), \ldots, z_n(r, s, f)$, where we interpret $z_i(r, s, f)$ as the portion of the change in $f$ due to the change in the $i^\text{th}$ variable, so that $z_1(r, s, f) + \cdots + z_n(r, s, f) = f(s) - f(r)$, which we call the \emph{completeness} condition on the attribution.  We interpret completeness as meaning that all the change in $f$ is accounted for. (We often omit the characteristic function and simply write $z_i(r, s)$ for $z_i(r, s, f)$.)

As we discuss attribution, we will keep the following motivating example in mind. See Section~\ref{applications} for other examples, and a broader discussion about the applicability of our techniques.
\begin{example}\label{ex:simple}
Consider a firm that repeatedly procures a good from a foreign supplier for use in its manufacturing process.  It incurs some expenditure, the product $e = a \cdot p \cdot c$ of the amount $a$ of the good that the buyer purchases, the average cost per unit $p$ of the good in the foreign currency, and the conversion rate $c$ between the foreign and local currencies.  We take $e(a, p, c)$ as the characteristic function.  The final values of $e$, $a$, $p$, and $c$ may be statistics from a certain quarter, and the initial values may be statistics from the preceding quarter.  The attribution problem, then, is to divide responsibility for the change in $e$ among the changes in $a$, $p$, and $c$.

Suppose further that the demand for the good comes from the manufacturing department (so an improvement in the efficiency of manufacturing reduces $a$), that the price for the good is negotiated by the procurement department (so an improvement in the negotiation process decreases $p$), and that the exchange rate is exogenously determined.  Such an attribution could then serve to apportion blame between or determine bonuses for the two departments.
\end{example}

How can we attribute the change in the characteristic function $f$ to the various variables? If $f$ were linear, that is, if $f$ takes the form $f(r_1, \ldots, r_n) = \sum_{i}b_ir_i$, then for a change from $r$ to $s$, it is natural to attribute $b_i \cdot (s_i - r_i)$ to the $i^\text{th}$ variable. For non-linear functions such as the one in the Example~\ref{ex:simple}, if the independent variables all change slightly, we could replace $b_i$ by the partial derivative with respect to $r_i$ at $s$, giving a linear approximation of $f$ locally at the final value and performing attribution as above. 

However, if the changes in the variables are not slight, then this approach would badly violate completeness. For instance, in Example~\ref{ex:simple}, suppose $a$ changes from $4$ to $5$, $p$ changes from $1$ to $12$, and $c$ changes from $1$ to $1.5$. Using this approach, the attributions to $a$, $p$, and $c$ are $(5 - 4) \cdot 12 \cdot 1.5 = 18$, $5 \cdot (12 - 1) \cdot 1.5 = 82.5$, and $5 \cdot 12 \cdot (1.5 - 1) = 30$, respectively.  This assigns the two departments and the exogenous currency rate change blame for a total of $18 + 82.5 + 30 = 130.5$ of change; but the total change is only $5 \cdot 12 \cdot 1.5 - 4 \cdot 1 \cdot 1 = 86$.  This means that the attribution violates completeness, making it difficult to interpret practically.  

It might seem in this example that the failure of completeness originated from a poor choice of point approximation for the partial derivative of $f$.  In general, no systematic use of such a point approximation suffices for our application.  However, in Subsection~\ref{sec-path}, we will examine a principled method of computing attributions along these lines.

\subsection{Axioms for attribution methods} \label{axioms}

Our attribution problem (almost trivially) generalizes the cost or surplus sharing problem from the social choice literature (cf. Moulin~\cite{Mou}), where the problem is to axiomatically share the cost of production or surplus among several agents.\footnote{This is also sometimes called the fair division problem.} The characteristic function is either cost or surplus, the independent variables correspond to demands or contributions of agents, and the attributions correspond to cost shares or profit shares. The completeness condition is the \emph{budget balance} condition for cost sharing.  We give a more detailed discussion of the relationship between these two problems in Subsection~\ref{versus}.

Following the cost sharing literature, we take an axiomatic approach to choosing methods to use for attribution problems. In this section we discuss the axioms we consider and briefly discuss motivations for them, emphasizing the attribution context; see the cited papers for a longer discussion in the cost sharing context. 

\begin{itemize}
\item \DUM: If the value of the characteristic function does not depend on a variable, then the attribution to that variable is zero. 

\noindent This axiom is very natural, as it simply requires that variables irrelevant to the outcome be ignored.\footnote{In the cost sharing context, \DUM is the bedrock of the no cross-subsidy (full-responsibility) theory. In such a theory, the variable is deemed responsible for asymmetries in the cost as well as asymmetries in the demand; see Moulin and Sprumont~\cite{MS} for a discussion.}

\item \DUMP: If the value of the characteristic function $f$ does not depend on a variable $r_i$ on $[r, s]$, then the attribution $z_i(r, s, f)$ to that variable is zero.

\noindent This axiom is a natural strengthening of \DUM.  It may be viewed as a local version of the global axiom \DUM.

\item \ADD: For all $r, s, f_1, f_2$, we have that $z_i(r, s, f_1 + f_2) = z_i(r, s, f_1) + z_i(r, s, f_2)$.  

\noindent This axiom yields a type of procedural invariance.  That is, if the system modeled by the characteristic function can be decomposed into several independent sub-processes that interact additively, we can compute the attributions separately for each sub-process. Alternatively, \ADD can be justified via \emph{lex parsimonae}. Constructing attributions is equivalent to linearizing the effect of changes in the independent variables. When an attribution method satisfies \ADD, it is minimal in the sense that it preserves the pre-existing linear structure of the characteristic function.

\item \ANON: The attributions are unchanged by relabeling of the variables. More formally, for any permutation $\sigma \in S_n$, if $f_\sigma(r_1, \ldots, r_n) = f(r_{\sigma^{-1}(1)}, \ldots, r_{\sigma^{-1}(n)})$, then for all $i$, we have
\[
z_{\sigma^{-1}(i)}(r_{\sigma(1)}, \ldots, r_{\sigma(n)}, s_{\sigma(1)}, \ldots, s_{\sigma(n)}, f_\sigma) = z_i(r, s, f).
\]

\noindent \ANON conveys the idea that all variables in the characteristic function should be treated equally, up to their initial and final values.

\item \CON: Suppose the characteristic function $f$ is non-decreasing in a variable $i$ on $[r, s]$. Then for all $r, s$, if $s_i \geq r_i$ (resp. $s_i \leq r_i$), then $z_i(r, s ,f) \geq 0$ (resp. $z_i(r, s, f) \leq 0$).

\item \MON~\cite{FM}: Suppose the characteristic function $f$ is non-decreasing in variable $j$.  Then, for input pairs $(r, s)$ and $(r, s')$ such that $s_i = s_i'$ for $i \neq j$ and $s_j < s_j'$, we have $z_j(r, s, f) \leq z_j(r, s', f)$.

\noindent \MON, and \CON preclude attributions with counterintuitive signs.

\item \SI~\cite{FM}: The attributions are independent of linear rescaling of individual variables.  That is, for any $c > 0$, if $g(r_1, \ldots, r_n) = f(r_1, \ldots, r_j/c, \ldots, r_n)$, then for all $i$ we have
\[
z_i(r, s, f) = z_i\Big((r_1, \ldots, cr_j, \ldots, r_n), (s_1, \ldots, cs_j, \ldots, s_n), g\Big).
\]

\noindent \SI conveys the idea that the attributions should be independent of the (possibly incomparable) units in which individual variables are measured.  It is especially compelling in the context of attribution because the different variables may refer to quantities of entirely different things.

\item \ASI~\cite{SW}: The attributions are invariant under simultaneous affine transformation of the characteristic function and the variables. That is, for any $c, d > 0$, if $g(r_1, \ldots, r_n) = f(r_1, \ldots, (r_j - d)/c, \ldots, r_n)$, then for all $i$ we have
\[
z_i(r, s, f) = z_i\Big((r_1, \ldots, cr_j + d, \ldots, r_n), (s_1, \ldots, cs_j + d, \ldots, s_n), g\Big).
\]

\noindent \ASI conveys the idea that both the units and the zero point of individual variables should not affect the value of the attribution.  Again, for attribution this is especially compelling, since the variables may represent values without naturally defined units or zero points.  For example, temperature is commonly measured in both Celsius and Fahrenheit scales, which are related by an affine transformation.

\end{itemize}

We include the \SI and \MON axioms only to facilitate discussion and comparison with axiomatizations of attribution methods in the cost sharing literature. They will play no role in the main results.

\subsection{Candidate attribution methods} 

In this section we describe some attribution methods motivated by the cost sharing literature, and mention the axioms they satisfy.

\subsubsection{Path methods} \label{sec-path}

We first consider a natural class of attribution methods, the path attribution methods, that are well-studied in the cost sharing context (see \cite{FM,Hai}).  These methods assign to each variable its marginal effect along some path from the initial point to the final point.  They are analogous to the approach based on partial derivatives outlined in Subsection \ref{theproblem}, but they salvage completeness by integrating the partial derivatives along a path instead of taking a naive estimate at a single endpoint.   Note that their definition is motivated by Theorem~\ref{representation} from the cost sharing literature, which we discuss in Subsection~\ref{versus}.

\begin{define} \label{def:path-attribution}
For each $r, s \in \RR^n$, let $\gamma_{r, s}: [0, 1] \to \RR^n$ be a $C^1$-function with $\gamma_{r, s}(0) = r$ and $\gamma_{r, s}(1) = s$, which we interpret as a path from $r$ to $s$.  Write $\gamma_{r, s} = (\gamma_{r, s, 1}, \ldots, \gamma_{r, s, n})$, and let $\gamma_{r, s, i}$ be non-decreasing if $r_i \leq s_i$ and non-increasing if $r_i \geq s_i$.  Then, the attribution method given by
\begin{equation} \label{speq}
z_i(r, s) = \int_0^1 \partial_i f(\gamma_{r, s}(t)) \gamma_{r, s, i}'(t) dt
\end{equation}
is the \textit{single-path attribution method} corresponding to the family of paths $\gamma_{r, s}$.  If the method
\[
z_i(r, s) = \sum_j c_j z^j_i(r, s) \text{ for $c_j \geq 0$ and $\sum_j c_j = 1$}
\]
is a convex combination of single-path attribution methods $z^j$, we say that $z$ is a \textit{path attribution method}.
\end{define}

For a single-path attribution method, we may check by the gradient theorem that 
\[
z_1(r, s) + \cdots + z_n(r, s) = \int_0^1 \sum_{i = 1}^n \partial_i f(\gamma_{r, s}(t)) \gamma_{r, s, i}'(t) dt = \int_{\gamma_{r, s}} \nabla \cdot f = f(s) - f(r),
\]
meaning that completeness is satisfied for each single-path attribution method.  Completeness is preserved under convex combinations and therefore holds for all path attribution methods.  Further, path attribution methods satisfy \DUM, \DUMP, \ADD, and \CON for all characteristic functions.  That \DUM and \DUMP hold is obvious, \ADD holds because partial differentiation is linear, and \CON holds because a characteristic function non-decreasing in a variable has a non-negative partial derivative with respect to that variable. 

In the cost sharing context, Theorem~\ref{representation} implies that these are essentially the only methods that satisfy \ADD and \DUM for all characteristic functions.  We suspect that an analogue of this result also holds in the attribution context, which guides our intuition. While none of our formal results rely on the notion of path methods, they provide a convenient way to think about attribution methods and are a useful starting point in our investigation of desirable attribution methods.  We will now identify some specific candidate path attribution methods, which will use the following general construction.

\begin{define} \label{scaled-path-attribution}
Fix a path $\gamma: [0, 1] \to [0, 1]^n$, non-decreasing in each variable, such that $\gamma(0) = 0$ and $\gamma(1) = (1, \ldots, 1)$.  Write $\gamma = (\gamma_1, \ldots, \gamma_n)$.  Then, the single-path attribution method corresponding to 
\[
\gamma_{r, s}(t) = r + \Big((s_1 - r_1) \gamma_1(t), \ldots, (s_n - r_n) \gamma_n(t)\Big)
\]
is the \textit{affine single-path attribution method} corresponding to $\gamma$.  An \textit{affine path attribution method} is a convex combination of affine single-path attribution methods.
\end{define}

\subsubsection{Methods based on the Shapley value} \label{shapley-based}

Recall that the Shapley value (Shapley~\cite{Shapley}) is a solution concept in cooperative game theory used to distribute the total surplus generated by a coalition of players among the players. We now mention two attribution methods that are adaptations of this discrete solution concept; both methods have been well-studied in the context of continuous demand cost sharing. We start with the method that is arguably the best known method in the cost sharing literature.

\begin{define} \label{AS}
The \textit{Aumann-Shapley} method~\cite{AS} is the affine single-path attribution method corresponding to the path $\gamma_i(t) = t$. 
\end{define}

This was identified by Aumann and Shapley~\cite{AS} as a `value' for non-atomic games. Next, we define a different and arguably more direct generalization of the Shapley value.

\begin{define} \label{SS}
 The \textit{Shapley-Shubik} method~\cite{FM,SS} is  defined as follows. For any $\sigma \in S_n$, let $\gamma^\sigma$ be the path
\[
\gamma_i^\sigma(t) = \begin{cases} 0 & tn < \sigma(i)-1 \\ (tn - \sigma(i)) & \sigma(i)-1 \leq tn < \sigma(i) \\ 1 & tn \geq \sigma(i), \end{cases}
\]
where $\gamma^\sigma$ walks along edges of the hypercube $[0, 1]^n$ in an order determined by $\sigma$.  Then, the Shapley-Shubik method is given by the average of the $n!$ path attribution methods corresponding to $\gamma^\sigma$. 

More generally, a \textit{random order method}~\cite{FM, Weber} is any convex combination of the affine path attribution methods corresponding to the $\gamma^\sigma$.  A \textit{value-variant random order method} is a path attribution method such that every path in the corresponding families of paths takes the form $\gamma_{r, s}^\sigma$.
\end{define}
\begin{remark}
Value-variant random order methods refine the notion of random order method in the following sense.  If we fix the initial and final values $r$ and $s$, the attributions $z_i(r, s)$ of a value-variant random order method are a convex combination of the values given by applying (\ref{speq}) for the paths $\gamma_{r, s}^\sigma$.  Such a method is a random order method if the weights of this convex combination do not depend on $r$ and $s$.
\end{remark}

The Aumann-Shapley method and value-variant random order methods (hence random order methods and the Shapley-Shubik method) satisfy \ADD, \DUM, and \DUMP because they are path attribution methods. The Aumann-Shapley method and random order methods (and hence the Shapley-Shubik method) satisfy \ASI by Lemma \ref{affineasi} because they are affine path attribution methods. The Aumann-Shapley and Shapley-Shubik methods additionally satisfy \ANON.

\begin{remark}
We may relate the Shapley-Shubik method to the Aumann-Shapley method as follows. The Shapley-Shubik attribution for a change from $r$ to $s$ is the expected attribution of a monotone random walk along the edges of the hypercube with opposite vertices at $r$ and $s$. If we subdivide the hypercube with opposite vertices at $r$ and $s$ into a grid of smaller hypercubes and consider monotonic random walks in this structure, the density of the resulting walks will be focused on the diagonal.  Hence, when the characteristic function satisfies some basic regularity conditions, the average of the path attribution methods corresponding to these walks will tend to the Aumann-Shapley method in the limit.
\end{remark}

\subsection{Statement of results} \label{sec-results}

When choosing an attribution method, it is very desirable to have a uniqueness result, one which says that there is exactly one method satisfying some axioms because such a result identifies a method for use.  If an attribution method is the unique method satisfying some axioms on a class of characteristic functions, we term these axioms an \textit{axiomatization} for the attribution method. We seek axiomatizations for a specific class of characteristic functions, namely those that are a sum of a additively separable function and a multilinear function, defined as follows. 

\begin{define} \label{functype}
A function $f: \RR^n \to \RR$ is \textit{additively separable} if there exist $f_i: \RR \to \RR$ with
\[
f(r_1, \ldots, r_n) = f_1(r_1) + \cdots + f_n(r_n).
\]
A function $f: \RR^n \to \RR$ is \textit{multilinear} if we may write $f$ in the form
\[
f(r_1, \ldots, r_n) = \sum_{I \subset [n]} c_I \prod_{i \in I} r_i,
\]
that is, as the sum of monomials of degree at most $1$ in each variable.
\end{define}

We justify our focus on a narrow class of characteristic functions in Section~\ref{sec-comparison}, and we demonstrate that such characteristic functions have several practical applications in Section~\ref{applications}.  We defer these considerations for now to state our results.  

Our main result is that there is a unique attribution method that satisfies \DUM, \ADD, \ANON, \CON and \ASI for all characteristic functions that are the sum of a multilinear function and an additive function (Theorem~\ref{unique-ASI-Anon}). Interestingly, Theorem~\ref{AS-SS-agree} shows that the Aumann-Shapley (Definition~\ref{AS}) and Shapley-Shubik (Definition~\ref{SS}) methods, both of which satisfy the axioms mentioned above, coincide for these characteristic functions. We therefore term this the \textit{Aumann-Shapley-Shubik} method. We give an efficient algorithm to compute it in Theorem~\ref{compute} and Corollary~\ref{compute-iterate}.

As an intermediate step toward proving Theorem~\ref{unique-ASI-Anon}, we show that the only methods that satisfy \ADD, \DUM' and \CON for all multilinear characteristic functions are value-variant random order methods (Theorem~\ref{pointwise}). Surprisingly, the proof implies that every path method (a continuous concept) is equivalent to some value-variant random order method (a combinatorial concept) for a multilinear characteristic function. 

To complete our results, we show in Theorem~\ref{converse} that for every characteristic function outside this class, no analog of Theorem~\ref{unique-ASI-Anon} is possible.  That is, we show that the Aumann-Shapley and the Shapley-Shubik methods coincide if and only if the characteristic function is the sum of a multilinear and an additively separable characteristic function. This shows that our restriction to this class of characteristic functions is not simply a technical convenience and provides in Corollary \ref{axiom-ASS} an axiomatization of the Aumann-Shapley-Shubik method. Section~\ref{sec-comparison} discusses further implications of this result.

\subsection{Attribution versus cost sharing} \label{versus}

In this subsection, we discuss the relationship between our attribution problem and the classical cost sharing problem. 

\subsubsection{Cost sharing as attribution}

Cost sharing models come in various flavors depending on whether the demands are binary, integral, or real-valued and whether the cost function is homogeneous or not (see Moulin~\cite{Mou} for a classification). In this sense our model resembles the $rr$, heterogeneous cost sharing model (both the characteristic function and the independent variables are real-valued, and the characteristic function is not homogeneous in the variables). More precisely, for a monotonically increasing cost function, $rr$ heterogeneous cost sharing is equivalent to attribution from $0$ to the final demand. 

There are two immediate differences between attribution and cost sharing.  First, in the attribution problem, variables change from one set of values to another, while in cost sharing there is just a single set of demands or contributions.  Secondly, attribution relaxes the requirement that the cost function be monotone. Therefore, while negative cost shares do not make sense, negative attributions can make sense in some contexts.

\begin{remark}
A naive approach to attribution might be to determine the attributions $z_i(r, s, f)$ as the difference of the cost sharing problems $z_i(0, s, f) - z_i(0, r, f)$.  Given a valid cost sharing method, this defines a valid attribution method.  However, this approach would not reflect the behavior of the characteristic function $f$ between $r$ and $s$, instead expressing the idea that the change in $f$ is from $r$ to $0$ and then from $0$ to $s$. More formally, \DUM' and \ASI are not satisfied by this approach. This suggests that naively applying the cost sharing framework may not be appropriate in this case.
\end{remark}

\subsubsection{Axiomatics for cost sharing}

Here we discuss axiomatization results from the cost sharing literature, both as motivation for some of our assumptions and as context for our results.   We begin with a result identifying the analogue of the path attribution methods in cost sharing as exactly those methods satisfying the most basic of the axioms we introduced in Subsection~\ref{axioms}. 

\begin{theorem}[Theorem 1 of \cite{Fri}] \label{representation}
Any cost sharing method satisfying \DUM and \ADD is a path cost sharing method.  
\end{theorem}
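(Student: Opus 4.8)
The plan is to exploit \ADD to linearize the method, use \DUM to localize each agent's share to a single partial derivative, represent the resulting linear functional by a measure, and finally recognize the collection of these measures as a flow from the origin to the demand vector that decomposes into paths. Throughout I write $\xi_i(f) = z_i(0, q, f)$ for the cost share of agent $i$ under demand vector $q$, and I use that a cost function is pinned down, up to the irrelevant additive constant (fixed by $f(0) = 0$), by its gradient field $\nabla f$.

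First I would use \ADD to pass from an additive method to a linear one. \ADD makes each $\xi_i$ an additive functional of $f$, and together with a mild regularity assumption on the method (continuity, or homogeneity) this upgrades to an $\RR$-linear functional of $f$. Next I would use \DUM to localize. By definition, if $f$ does not depend on variable $i$, i.e. $\partial_i f \equiv 0$, then $\xi_i(f) = 0$; linearity then forces $\xi_i$ to annihilate the entire subspace $\{f : \partial_i f \equiv 0\}$, so that $\xi_i(f)$ depends on $f$ only through the single partial derivative $\partial_i f$. Factoring through this dependence and invoking a Riesz--Markov type representation on the compact demand box yields a (signed) measure $\mu_i$ with $\xi_i(f) = \int \partial_i f \, d\mu_i$.

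Then I would assemble the measures into a flow and decompose it. The vector-valued measure $T = (\mu_1, \ldots, \mu_n)$ acts by $\sum_i \xi_i(f) = \langle T, \nabla f\rangle$, and the budget-balance (completeness) condition $\sum_i \xi_i(f) = f(q) - f(0)$ for \emph{every} $f$ forces $T$ to be a unit flow from the source $0$ to the sink $q$, i.e. its boundary is the difference of point masses at $q$ and $0$. A single monotone path from $0$ to $q$ produces exactly such a flow, which is precisely formula (\ref{speq}) and gives the completeness computation following Definition~\ref{def:path-attribution} via the gradient theorem; conversely, a flow-decomposition theorem (the measure-theoretic analogue of the classical decomposition of a conservative flow into paths) writes $T$ as a mixture $\int [\gamma] \, d\nu(\gamma)$ of path-flows against a probability measure $\nu$. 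Substituting back exhibits $\xi$ as a convex combination of single-path methods, i.e. a path cost sharing method in the sense of Definition~\ref{def:path-attribution}.

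The hard part is the last step: verifying that $T$ is genuinely conservative and then decomposing it into \emph{monotone} paths with nonnegative weights. The flow-decomposition statement for measures, rather than for finite graphs, is the real technical engine, and checking that the constituent paths inherit the required monotonicity — so that the output is an honest path method rather than merely a signed path-integral — is where I expect the effort to concentrate; here the standard hypothesis that cost functions are non-decreasing (which gives the relevant functionals a sign) should supply the missing orientation. By comparison, Steps 1 and 2 are routine once the regularity assumption needed for the Riesz representation is granted.
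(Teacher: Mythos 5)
This statement is quoted from the cost sharing literature (Theorem 1 of Friedman's paper) and is not proved anywhere in the present paper, so there is no in-paper proof to compare against; I can only measure your sketch against the known argument. Your outline does track the standard strategy --- localize via \DUM, represent the resulting functional by a measure, read budget balance as a flow-conservation condition, and decompose the flow into paths --- so the architecture is right.

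Two gaps are worth naming. First, your ``mild regularity assumption'' in Step 1 is doing real work that \ADD and \DUM alone do not supply: an additive functional need not be linear without some extra structure. The way this is actually handled is via positivity --- cost-sharing methods assign nonnegative shares, cost functions are nondecreasing, and a monotone additive functional on that cone is automatically linear by the Cauchy functional equation (exactly the device in Lemma~\ref{cauchy} of this paper). The same positivity is what makes each representing measure $\mu_i$ \emph{nonnegative}, and that nonnegativity is indispensable: without it the flow $T$ decomposes only into signed path-integrals, not into monotone paths with nonnegative weights, and you do not get a path method in the sense of Definition~\ref{def:path-attribution}. You gesture at this at the very end, but it belongs at the start, since it is what licenses both the Riesz representation and the monotonicity of the constituent paths. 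Second, the flow-decomposition step --- writing a nonnegative divergence-measure vector field with boundary $\delta_q - \delta_0$ as a mixture of monotone paths --- is asserted rather than argued, and it is essentially the entire content of the theorem; a proof would need either the specific combinatorial structure Friedman exploits or a Smirnov-type decomposition of acyclic normal currents, together with a check that nonnegativity of the components rules out cyclic parts. As written, the proposal is a correct roadmap with the two decisive verifications left as black boxes.
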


We now give two axiomatizations of the Aumann-Shapley and Shapley-Shubik methods in the cost sharing context.  The Aumann-Shapley method was axiomatized by Billera and Heath~\cite{BH} and Mirman~\cite{Mir} in the following theorem. 

\begin{theorem}[\cite{BH, FM, Mir}] \label{AS-unique}
The Aumann-Shapley method is the unique cost sharing method that satisfies \ADD, \DUM, \SI, and \texttt{Average Cost for Homogeneous Goods}, which states that, for cost functions that are a function of the sum of the demands, the cost shares should be proportional to the demands.
\end{theorem}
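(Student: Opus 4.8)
The plan is to first verify that the Aumann--Shapley method satisfies the four axioms, which is routine, and then to establish uniqueness by pinning down an arbitrary cost sharing method $z$ obeying \ADD, \DUM, \SI, and \texttt{Average Cost for Homogeneous Goods} on a class of cost functions rich enough to force agreement with Aumann--Shapley. In the cost sharing setting we attribute from $0$ to the demand $s$, so the Aumann--Shapley share is $z_i(0,s,f) = s_i \int_0^1 \partial_i f(ts)\,dt$: that it satisfies \ADD and \DUM is immediate, it satisfies \SI because it is the diagonal (affine single-path) method, and it satisfies \texttt{Average Cost} since $s_i \int_0^1 G'\big(t \sum_j s_j\big)\,dt = \frac{s_i}{\sum_j s_j}\big(G(\sum_j s_j) - G(0)\big)$ whenever $f(r)=G(\sum_j r_j)$.

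For uniqueness I would determine $z$ on progressively larger families of cost functions. First, \texttt{Average Cost for Homogeneous Goods} together with completeness $\sum_i z_i(0,s,f) = f(s)-f(0)$ forces, for every $f(r)=G(\sum_i r_i)$, the shares $z_i(0,s,f) = \frac{s_i}{\sum_j s_j}\big(f(s)-f(0)\big)$, which already coincide with Aumann--Shapley. Next, applying \SI one coordinate at a time (rescaling variable $i$ by $c_i = 1/a_i$ and the demand correspondingly) transfers this determination to every weighted function $f(r) = G(\sum_i a_i r_i)$ with $a_i>0$; taking $G(x)=x^d$ in particular fixes the shares of each power $\big(\sum_i a_i r_i\big)^d$.

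The crux is then a polarization argument: the powers $\big\{\big(\sum_i a_i r_i\big)^d : a_i>0\big\}$ span the space of homogeneous polynomials of degree $d$. Over all real coefficients this is the classical statement that $d$-th powers of linear forms span every homogeneous polynomial of degree $d$; restricting to the positive orthant loses nothing, since a linear functional annihilating all these powers would be a polynomial in $a$ vanishing on an open set and hence identically zero. Combined with \ADD, this determines $z$ on every polynomial cost function, and because Aumann--Shapley is one admissible method, $z$ must equal it there.

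I expect the principal obstacle to be the passage from polynomial cost functions to the full domain on which cost sharing methods are defined, such as the $C^1$ cost functions. Polynomials are dense, so one must argue that the axioms force a closure or continuity property; this is where \DUM is essential, as \ADD and \DUM together place $z$ among the path cost sharing methods by Theorem~\ref{representation}, and the diagonal-path selection forced above then propagates to all functions by approximation. One must also be careful that \SI supplies only \emph{positive} rescalings, which is exactly why the polarization step is phrased over the positive orthant, and that the non-homogeneous and low-degree pieces are absorbed into the same scheme, with \DUM ensuring that dummy variables receive zero share.
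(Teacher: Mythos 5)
The paper states this theorem as a citation from the cost-sharing literature (\cite{BH, FM, Mir}) and provides no proof of it, so there is no internal argument to compare against; your sketch follows the classical Billera--Heath/Mirman--Tauman route: pin the method down on $G(\sum_i r_i)$ via \texttt{Average Cost for Homogeneous Goods} and completeness, transport to $G(\sum_i a_i r_i)$ via \SI, span the homogeneous polynomials by $d$-th powers of positive linear forms, and pass to general cost functions by density. The verification half, the first two uniqueness steps, and the Zariski-density argument for restricting the polarization to the positive orthant are all correct.

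The genuine gap is at the step ``Combined with \ADD, this determines $z$ on every polynomial cost function.'' \ADD gives only additivity, hence $\mathbb{Q}$-linearity of $f \mapsto z_i(0,s,f)$, and even that only when every function being added is itself an admissible cost function (typically non-decreasing); the polarization identity expresses a general homogeneous polynomial as a \emph{signed real} linear combination of powers $\bigl(\sum_i a_i r_i\bigr)^d$, so you cannot conclude without first upgrading \ADD to $\RR$-linearity and rearranging the identity so that both sides of each application of \ADD stay in the admissible domain. The paper is careful about exactly this issue in its own analogous argument: in Theorem~\ref{pointwise} it invokes \CON together with Lemma~\ref{cauchy} (a Cauchy-functional-equation argument) to convert \ADD into genuine linearity, and no such axiom appears in the present list; the published versions of the theorem correspondingly carry regularity hypotheses on the method. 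Your own closing remark contains the repair: invoke Theorem~\ref{representation} \emph{first}, so that $z$ is a path cost-sharing method and is therefore automatically $\RR$-linear and continuous in $f$; the determination on the family $G(\sum_i a_i r_i)$, the polarization, and the $C^1$-density of polynomials then finish the argument. As written, however, the representation theorem is deployed only for the final density step, leaving the linearity step unsupported.
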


For the Shapley-Shubik method, we have the following axiomatization given by Friedman and Moulin~\cite{FM}.

\begin{theorem}[Theorem 1 of \cite{FM}] \label{thm:SS}
Any cost sharing method satisfying \ADD, \DUM, \MON, \SI, and \axiom{Continuity at Zero} (cost shares are continuous in each variable near $0$) is a random order method.  The Shapley-Shubik method is the unique cost sharing method that satisfies \ANON in addition to \ADD, \DUM, \MON, \SI, and \axiom{Continuity at Zero}.
\end{theorem}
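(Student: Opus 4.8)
The plan is to bootstrap from Theorem~\ref{representation}. Since the Shapley--Shubik axiomatization lists \ADD\ and \DUM\ among its hypotheses, the representation theorem immediately tells us that any method satisfying the five axioms is already a path cost sharing method, that is, a probabilistic mixture of single-path methods integrating marginal cost along paths from the origin $0$ to the demand vector $q$. The whole burden of the argument is then to use the three remaining axioms---\SI, \MON, and \axiom{Continuity at Zero}---to show that the supporting paths may be taken to be the monotone edge paths $\gamma^\sigma$ of Definition~\ref{SS}, and finally that \ANON\ pins the mixing weights to the uniform distribution.

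First I would use \SI\ to strip away the dependence on the demand vector. Rescaling coordinate $i$ by $c = 1/q_i$ carries the cost sharing problem on the box $[0, q]$ to the problem on the unit cube $[0,1]^n$, and \SI\ asserts the cost shares are unchanged. Consequently the mixture of paths produced by the representation theorem can be encoded by a single probability measure $\mu$ on monotone paths in $[0,1]^n$ from $0$ to $(1, \ldots, 1)$, independent of $q$, and the method is completely determined by $\mu$. This reduces the problem to a statement about the support of $\mu$.

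The main obstacle is the next step: showing that \MON\ and \axiom{Continuity at Zero} force $\mu$ to be supported on the $n!$ edge paths $\gamma^\sigma$. The strategy is to probe the method with carefully chosen monotone cost functions that localize the behavior of a path. For a cost function built as a smoothed product of thresholds, $C(q) \approx \prod_i \mathbf{1}[q_i \geq a_i]$, the single-path cost share of variable $i$ records how the path crosses the threshold box, so applying \MON\ as the thresholds vary forces the order in which the coordinates of a path reach their thresholds to be consistent---equivalently, that at almost every time at most one coordinate is increasing. In other words, $\mu$-almost every path walks along edges rather than moving diagonally, so it is a reparametrization of some $\gamma^\sigma$, while \axiom{Continuity at Zero} is what excludes degenerate paths that jump at the origin. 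Thus $\mu$ descends to a probability distribution $(p_\sigma)_{\sigma \in S_n}$ and the method is a random order method, establishing the first assertion. (This is the technical core of the cited Theorem~1 of \cite{FM}; compare also Weber's characterization~\cite{Weber}.)

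Finally, for the uniqueness statement I would add \ANON\ and argue by symmetry. Relabeling the variables by a permutation $\tau \in S_n$ carries the edge path $\gamma^\sigma$ to $\gamma^{\tau\sigma}$, so a random order method with weights $(p_\sigma)$ is transformed into the one with weights $(p_{\tau^{-1}\sigma})$. \ANON\ demands that these coincide for every $\tau$, that is, $p_\sigma = p_{\tau\sigma}$ for all $\tau, \sigma$; since $S_n$ acts simply transitively on itself by left multiplication, all the weights are equal and $p_\sigma = 1/n!$. This is precisely the Shapley--Shubik method, and a routine check that it satisfies all five axioms completes the uniqueness claim.
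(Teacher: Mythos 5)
This statement is Theorem~1 of Friedman--Moulin, which the paper imports as background and does not prove; there is no in-paper argument to compare yours against, so your proposal has to stand on its own as a proof of the cited result.

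As such, it has a genuine gap at exactly the step you flag as ``the technical core.'' The outer architecture is fine: Theorem~\ref{representation} reduces to path methods, \SI{} (applied coordinate-by-coordinate with $c = 1/q_i$, plus \axiom{Continuity at Zero} to handle $q_i = 0$) normalizes everything to a single measure $\mu$ on monotone paths in $[0,1]^n$, and the \ANON{} endgame via the simply transitive left action of $S_n$ on itself is correct and routine. But the claim that \MON{} and \axiom{Continuity at Zero} force $\mu$ onto the edge paths $\gamma^\sigma$ is asserted via a heuristic and then explicitly deferred to \cite{FM}, so nothing is actually proved. Worse, the criterion you extract from the heuristic --- ``at almost every time at most one coordinate is increasing'' --- does not imply the conclusion: a staircase path in $[0,1]^2$ that increases $x_1$ halfway, then $x_2$ halfway, then $x_1$ the rest, then $x_2$ the rest satisfies that criterion yet is not a reparametrization of any $\gamma^\sigma$, and the single-path method it induces is not a fixed convex combination of the two edge-path methods for generic cost functions. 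The real content of the Friedman--Moulin argument is to show that \MON{} rules out such interleavings as well, i.e.\ that each coordinate must traverse its full range in one contiguous block so that the path passes through vertices of the cube; your probe with threshold products does not see this distinction. Until that step is supplied, the first assertion of the theorem is unproved, and the uniqueness assertion, which rests on it, is too.
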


\begin{remark}
In the attribution context, the Aumann-Shapley and Shapley-Shubik methods satisfy the axioms of Theorems~\ref{AS-unique} and~\ref{thm:SS}, but it is not clear if the uniqueness properties continue to hold.  We suspect that these results should also carry over to the attribution framework (with very similar proofs) after some appropriate modification.
\end{remark}

\subsubsection{Axiomatization for attribution versus axiomatization for cost sharing} \label{sec-comparison}

Our approach to the axiomatic study of attribution methods differs from that taken in the cost sharing literature.  A typical axiomatic result in the cost sharing literature (like Theorems~\ref{AS-unique} and~\ref{thm:SS}) identifies a certain cost sharing method as the unique method that satisfies certain axioms for \emph{all} cost functions (cf.~\cite{FM,Shapley}). This does not preclude the existence of multiple methods that satisfy the same set of axioms for a certain subclass of cost functions. For instance, Redekop~\cite{Red} notes that the Aumann-Shapley cost sharing method satisfies the axioms mentioned in the uniqueness result for the Shapley-Shubik method (Theorem~\ref{thm:SS}) when the cost function has increasing marginal costs (i.e. when the cost function is convex).\footnote{It is easy to show directly that all path attribution methods satisfying \SI are monotone for convex functions, of which multilinear functions with positive coefficients are an instance. Redekop~\cite{Red} notices this for the Aumann-Shapley attribution method.} 

In our model, the characteristic function is known when the attribution method is selected, so general uniqueness results similar to Theorems~\ref{AS-unique} and~\ref{thm:SS} are not necessarily sufficient to guide the selection of an attribution method. For instance, for a specific convex characteristic function, there might be more than one attribution method which satisfies the axioms required in Theorems \ref{AS-unique} and \ref{thm:SS}, meaning that they are not enough to select a unique method; in fact, all the applications in Section~\ref{applications} have convex characteristic functions. We therefore seek and successfully identify (see Section~\ref{sec-results}) axiomatizations that quantify less universally over the space of characteristic functions.

In addition, quantifying less universally over the space of characteristic functions allows us to be more parsimonious with axioms.  In the case of multilinear functions, our main result allows us to characterize the Aumann-Shapley method without using \axiom{Average Cost for Homogeneous Goods}, a `partial domain axiom,' which, as Friedman and Moulin~\cite{FM} argue, is not very natural because it applies only to part of the space of initial and final values. 

In \cite{Owen}, Owen associates a multilinear function to any cooperative game so that applying the Aumann-Shapley method to this function yields the Shapley value of the game.  A generalization of these techniques may be used to prove Corollary~\ref{asirand} for path methods.  In this context, the full Corollary~\ref{asirand} may be viewed as a generalization to the case of an arbitrary attribution method.  Further, Theorem~\ref{converse} should be of independent interest to the cost sharing community because it identifies conditions on the characteristic function under which two important cost sharing methods, the Aumann-Shapley and Shapley-Shubik methods, can coincide, giving a converse to the result of~\cite{Owen} under slightly stronger conditions.\footnote{We note that our proof of Theorem~\ref{converse} does require the attribution context, however, as it relies crucially on the fact that attributions exist between any two pairs of values.}

\subsection{Notations}

We write $[n]$ for the set $\{1, 2, \ldots, n\}$.  For a set of variables $c_1, \ldots, c_n$ and a subset $I \subset [n]$, write $c$ for the $n$-tuple $(c_1, \ldots, c_n)$ and $c_I$ for the product $\prod_{i \in I} c_i$ over the indices in $I$. For two sets of variables $c_1, \ldots, c_n$ and $d_1, \ldots, d_n$, we write $c < d$ (resp. $c \leq d$) if $c_i < d_i$ (resp. $c_i \leq d_i$) for all $i$.  We write $[c, d]$ for the closed box $I_1 \times \cdots \times I_n$, where $I_i$ is the closed interval bounded by $c_i$ and $d_i$.  We use $0$ to denote the vector $(0, \ldots, 0)$ containing all $0$'s.  The length of this vector will always be clear from context.  For a function $f: \RR^n \to \RR$ and a multiset of indices $\alpha$, we denote by $\partial_\alpha f$ the mixed partial derivative with respect to the indices in $\alpha$.  In all cases where this construction appears, we will assume that $f$ is chosen so that Young's theorem on the equality of mixed partial derivatives holds.

\section{Applicability of our Model} \label{applications}

In this section we discuss practical applications of our model. For our model to be applicable, the characteristic function must be known, deterministic, and multilinear, and the values of the variables at the initial and final points must be known exactly. The examples in this section satisfy these properties, and are indicative of other settings in which our model is potentially applicable.

We begin with a few examples motivated by the Internet.

\begin{example} [Pay-per-click advertising \cite{ppc}]\label{ex:ss}
 The characteristic function is the spend $s$ of an advertiser, which can be expressed as the product $s = c \cdot p$ of the number of clicks $c$ that an advertiser's advertisement received and the average cost per click $p$. The final values of $s$, $c$, and $p$ are be statistics from a certain week, and the initial values are be statistics from the preceding week.  The problem then is to identify to what extent the  advertiser's change in spend is due to a change in the number of clicks versus a change in the cost per click.

A more granular spend model applicable in a specific form of pay-per-click advertising called sponsored search advertising is
\[
f_\text{spend} = q\cdot  b \cdot \sum_i p_i \cdot \mathrm{CTR}_i \cdot \mathrm{CPC}_i.
\]
Here, $q$ is the number of ad-views that the advertiser is eligible for, $b$ is the probability that the ads have sufficient budget to show, $p_i$ is the probability that an ad appears in the $i^\text{th}$ auction position, and $(\mathrm{CTR}_\text{i}, \mathrm{CPC}_\text{i})$ are the click through rate and the cost per click for the $i^\text{th}$ auction position.\footnote{Recall that all major search engines place some ads based on the results of an auction.}
\end{example}

\begin{example} [e-Commerce website analysis] \label{ex:website}
Consider an online retailer's website. We can model the website as a directed acyclic graph with a single sink $t$, which is the page displayed on a successful transaction (see Archak et al.~\cite{Archak} and Immorlica et al.~\cite{immorlica} for similar models). For every page, let $s_j$ denote the number of times that a surfer starts on page $j$. For every hyperlink directed from page $i$ to page $j$, let $p_{ij}$ denote the probability on average that a surfer follows this link given that he or she is at page $i$. The expected number of successful transactions is 
\[
\sum_{i \in V} s_i \sum_{\text{$P$ a path from $i$ to $t$}}\prod_{ (r,s) \in P } p_{rs},
\]
which is multilinear. The initial values for the variables are average statistics for the last year, and the final values are the same statistics for this year. The attributions to the variables $\{s_j\}$ and $\{p_{ij}\}$ may then yield insight into changes in traffic patterns that impact sales.
\end{example}

In our model, we require the characteristic function to be known and deterministic. This is in contrast to the fields of Regression Analysis~\cite{REG}, where the function and the inputs are statistical quantities and require model fitting and estimation, and structural equation modeling~\cite{SEM}, where additionally the variables may also require inference. Example~\ref{ex:ss} satisfies these conditions because the characteristic function models a software system whose working is known and deterministic; Example~\ref{ex:website} satisfies these conditions because the characteristic function models a flow through a known graph; and Example~\ref{ex:simple} from the introduction satisfies these conditions because it models a known supply chain.

Further, in both examples, we explain the change in performance of a system that occurred in the \emph{past}. Consequently, we have full-information about the initial and final values of the variables. Our model is not explicitly set up to be predictive about the future, though the insights gained could potentially be useful to guide future decisions. For instance, in Example~\ref{ex:ss}, advertisers who notice that a large negative impact is attributed to the budget variable $b$ may choose to raise their budgets. 

\begin{remark} 
One advantage of attribution is that it enables the comparison of changes in unlike quantities: For instance, in Example~\ref{ex:ss}, the advertiser can meaningfully compare the \emph{impact} of a change in the cost per click to the \emph{impact} of a change in the budget-related throttling rate ($b$). Such comparisons can aid decision making. In this example, the advertiser can decide if it is more worthwhile to focus on changing budgets, or controlling the cost per click (by changing bids in the ad auction).
\end{remark}

While the previous examples were about explaining the \emph{change} in the performance of a system, the next example, which is motivated by investment, involves comparing the performance of a system against a benchmark.
 
\begin{example} \label{ex:investment}
The performance of a portfolio can be expressed as the sum
\[
\sum_{i \in S} r_i \cdot w_i,
\]
where $r_i$ is the return within an asset class $i$ and  $w_i$ the amount invested within this asset class. Performance attribution~\cite{perf} attempts to explain why the performance of a portfolio (the final variables) deviates from the performance of a benchmark portfolio (the initial variables).  In particular, it asks whether the deviation in performance is due to the difference in the allocation of investments across asset classes (the attributions to the $w_i$'s) or to the selection of assets within an asset class (the attributions to the $r_i$'s). 

The standard way of doing performance attribution involves considering an active allocation term  $r_{i}^1 \cdot (w_{i}^2 - w_{i}^1)$, a security selection term $w_{i}^1 \cdot (r_{i}^2 - r_{i}^1)$, and a slack term $(r_{i}^2 - r_{i}^1) \cdot (w_{i}^2 - w_{i}^1)$ for each asset class; the latter term is necessary for completeness, but does not yield any insight.  In contrast, our approach yields completeness automatically.
\end{example}

Finally, here is an example from performance analysis of basketball statistics.

\begin{example}\label{ex:sports} 
Suppose the coaching staff of a basketball team wants insight into the change in offensive performance of the team from last year (the initial version of the variables) to this year (the final version of the variables).  Such studies are currently done in other frameworks as in~\cite{basketball2} or~\cite{basketball}.  Letting $n_i$, $m_i$, $a_i$, and $p_i$ be the number of games per season, the number of minutes per game, the number of attempts per minute, and the field goal percentage of each player, the total number of points scored by the team is
\[
f_\text{points} = \sum_{i} n_i \cdot m_i \cdot a_i \cdot \frac{p_i}{100}.
\]
Using attributions for $f_\text{points}$ in combination with other information can help the coaches understand and refine the performance of the team.
\end{example}

We now give two remarks illustrating some advantages of our attribution approach.

\begin{remark} 
A common way for humans to perform attribution relies on counterfactual intuition.  For instance, when we assert that smoking causes cancer, there is a presumption that holding all other things constant, not smoking will reduce the chance of contracting cancer. Such counterfactual semantics have been used as the basis for logics of causation (see Chapter 7 from Pearl~\cite{pearl}, for instance). 

Path methods (Definition~\ref{def:path-attribution}) and hence all the methods we consider in this paper have a natural counterfactual interpretation.  Every path method considers the counterfactual of moving between the initial and final values along the chosen family of paths.  Breaking this down, we may consider a path as the limit of piecewise linear paths which change only one independent variable at once.  From this viewpoint, the attribution to an independent variable is simply the cumulative change in the function due to this infinite number of infinitesimal counterfactuals.
\end{remark}

\begin{remark}
Let us reiterate the benefit of a method satisfying \ASI in light of the above examples. For many attribution problems, the units in which variables are measured are a matter of convention and are not canonical in any sense.  For instance, in pay-per-click advertising (Example~\ref{ex:ss}), the cost of advertising may be measured as the cost per thousand impressions or the cost per million impressions (see~\cite{ppc}), and, in basketball statistics, field goal accuracy is popularly expressed as a percentage between $0$ and $100$ rather than an accuracy rate between $0$ and $1$. 

In these examples, it critical that a different scaling of the units does not change the attribution. Specifying variables can be even more difficult than this, however.  Consider a characteristic function which depends on a dimensionless physical quantity such as the Reynolds number of a chaotic fluid or the Prandtl number of a material.  Such quantities lack natural units or even canonical reference points; for them, we would like the attribution to be invariant not only under rescaling of units but also changes of the zero points of these units. Such changes are exactly affine transformations, leading us to the \ASI axiom.
\end{remark}

We conclude this section with a discussion of the importance of applying attribution techniques carefully to yield meaningful insights.

\begin{remark} 
We return to the context of Example~\ref{ex:ss}. Besides sponsored search advertising, another common form of advertising is content advertising, that is, advertising on websites. Search ads commonly have a higher cost per click (CPC) than content ads because they are typically more contextual.  

Consider an advertiser who employs both forms of advertising, and who seeks to use attribution methods to analyze the impact of a change in the CPC on the amount of money it spends on advertising.  Suppose that the situation of the advertiser is summarized by Table~\ref{mix}.  Note that though the search CPC and the content CPC have both doubled, the overall CPC has actually fallen because of an increase in the proportion of clicks from content ads. 

There are two possible ways to perform attribution in this situation.  One way is to first compute the change in the overall CPC and use this to perform attribution (using the Aumann-Shapley-Shubik method, for instance).  Because the overall CPC fell, we would conclude that CPC's had a negative impact on spend.  Alternatively, if we reasoned about the impact of a change in the CPC of search ads and content ads separately, and then \emph{aggregated} the attributions, we would come up with the more meaningful conclusion that CPC's had a positive impact on the change in spend. Thus, \emph{aggregating the attributions} is more meaningful than \emph{attributing with aggregates} in this example.

\begin{table}[!ht]
\begin{center}
\begin{tabular}{ |l | c | c | c | c | c| c} \hline
   & Search CPC (\$) & Search Clicks & Content CPC (\$) & Content Clicks  & Overall CPC (\$) \\
  \hline
  Initial & 1 & 100 & 0.01 & 100 & 0.505 \\
  Final & 2 & 100 & 0.02 & 10000 & 0.0396\\ \hline
\end{tabular}
\end{center}

\medskip

\caption{An example with mix effects.}  \label{mix}
\end{table}
\end{remark}

\section{Characterizations of attribution methods for multilinear functions} \label{sec-main}

In this section, we seek an axiomatization for the class of multilinear functions. We focus on the class of multilinear functions for two reasons. First, this class of functions has several applications as illustrated in the previous section. Second, as discussed in Subsection~\ref{sec-comparison}, axiomatizations over a narrow family of functions can be more meaningful than axiomatizations that quantify widely over characteristic functions. 

We ignore additively separable functions for the rest of this section due to the following uniqueness result.

\begin{lemma} \label{addsepuni}
On additively separable functions, there is a unique attribution method that satisfies \ADD and \DUM.  
\end{lemma}
\begin{proof}
Write an additively separable function $f$ in the form $f(r_1, \ldots, r_n) = f_1(r_1) + \cdots + f_n(r_n)$.  By \ADD, for all $i$ we have
\[
z_i(r,s,f) = z_i(r,s,f_1) + \cdots + z_i(r,s,f_n).
\]
Now, by \DUM, $z_i(r,s,f_j) = 0$ for $j \neq i$, so by completeness $z_i(r,s,f_i) = f_i(s_i) - f_i(r_i)$, which implies that $z_i(r, s, f) = f_i(s_i) - f_i(r_i)$ is the unique attribution method on additively separable functions.
\end{proof}

\subsection{Methods that satisfy \ADD, \DUMP, and \CON} \label{adcmethods}

In this section we characterize attribution methods that satisfy the basic axioms \ADD, \DUMP, and \CON for multilinear characteristic functions.  

\begin{theorem} \label{pointwise}
Any attribution method on multilinear functions which satisfies \ADD, \DUMP, and \CON is a value-variant random order method.
\end{theorem}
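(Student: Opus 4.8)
The plan is to fix the endpoints $r,s$ and show that the operator $f \mapsto (z_1(r,s,f),\dots,z_n(r,s,f))$ is the expected marginal-contribution operator of a single probability distribution over orderings that may depend on $(r,s)$ but not on $f$. I may assume $r_i \neq s_i$ for every $i$: a coordinate with $r_i=s_i$ receives attribution $0$ by \DUMP (on the box $[r,s]$ the function is constant in that coordinate), which is consistent with every random order method, and the remaining coordinates can be treated in isolation. I identify a multilinear $f$ with its values $f(x_A)$ at the $2^n$ corners $x_A$ of $[r,s]$, where $(x_A)_j=s_j$ for $j\in A$ and $(x_A)_j=r_j$ otherwise, and I write $\Delta_i^S f = f(x_{S\cup i})-f(x_S)$ for the marginal contribution of moving $i$ from $r_i$ to $s_i$ against a background $S\subseteq[n]\setminus\{i\}$. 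A short computation shows that the attribution of $f$ to $i$ along the edge-walk path $\gamma^\sigma_{r,s}$ is exactly $\Delta_i^{\mathrm{pred}_\sigma(i)}f$, where $\mathrm{pred}_\sigma(i)=\{j:\sigma(j)<\sigma(i)\}$; hence a value-variant random order method is precisely an operator of the form $z_i(r,s,f)=\sum_\sigma w_\sigma\,\Delta_i^{\mathrm{pred}_\sigma(i)}f$ with $w_\sigma=w_\sigma(r,s)\geq 0$ and $\sum_\sigma w_\sigma=1$.

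First I would use \ADD — which I take to make $f\mapsto z_i(r,s,f)$ a real-linear functional, additivity extending to homogeneity by the standard regularity argument with the monotonicity supplied by \CON — to reduce to the action of $z_i$ on the monomial basis $\{r_J\}$, and \DUMP to record that $z_i(r,s,r_J)=0$ whenever $i\notin J$. Since each $\Delta_i^S$ also annihilates $\spann\{r_J:i\notin J\}$, both $z_i(r,s,\cdot)$ and the family $\{\Delta_i^S\}_{S\subseteq[n]\setminus i}$ live in the $2^{n-1}$-dimensional dual of $\spann\{r_J:i\in J\}$. Computing $\Delta_i^S r_{\{i\}\cup T}=(s_i-r_i)\prod_{j\in T\cap S}s_j\prod_{j\in T\setminus S}r_j$ exhibits the relevant $2^{n-1}\times 2^{n-1}$ matrix as a tensor product of $2\times 2$ blocks of determinant $s_j-r_j$, hence invertible once all $r_j\neq s_j$. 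Thus there are unique coefficients $p_i(S)$ with $z_i(r,s,f)=\sum_S p_i(S)\,\Delta_i^S f$ for every multilinear $f$; this realizes $z_i$ as a probabilistic value, and it remains only to see that the $p_i$ are the predecessor-set marginals of an honest distribution over orderings.

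Next I would pin down the two constraints on the $p_i$. Because the above matrix is invertible, the map $f\mapsto(\Delta_i^S f)_S$ is onto $\RR^{2^{n-1}}$, and for multilinear $f$ the hypothesis of \CON that $f$ be monotone in $i$ on $[r,s]$ is equivalent to a sign condition on all the $\Delta_i^S f$ simultaneously; feeding \CON functions realizing an arbitrary orthant vector then forces the functional $\sum_S p_i(S)(\cdot)$ to be sign-definite on that orthant, whence $p_i(S)\geq 0$ for all $i,S$. Applying the representation to the additively separable monomial $r_i$, for which $\Delta_i^S r_i=s_i-r_i$ for every $S$, gives $\sum_S p_i(S)=1$.

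The crux, and the step I expect to be the main obstacle, is upgrading these per-variable distributions (a probabilistic value) to one joint distribution over orderings (a random order method); a priori the former is a strictly weaker notion. The observation that resolves this is that completeness is exactly a flow-conservation law. Expanding $\sum_i z_i(r,s,f)=\sum_i\sum_S p_i(S)\Delta_i^S f$ and matching the coefficient of each corner value $f(x_A)$ against that in $f(s)-f(r)=f(x_{[n]})-f(x_\emptyset)$ — legitimate since the corner evaluations are linearly independent on multilinear functions — yields $\sum_{i\in A}p_i(A\setminus i)=\sum_{i\notin A}p_i(A)$ for every $\emptyset\neq A\neq[n]$, together with $\sum_i p_i(\emptyset)=\sum_i p_i([n]\setminus i)=1$. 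Reading $p_i(S)\geq 0$ as flow on the Hasse edge $S\to S\cup\{i\}$ of the Boolean lattice, these are precisely conservation at every internal node and unit flow from the source $\emptyset$ to the sink $[n]$. By the flow decomposition theorem this unit flow is a convex combination $\sum_\sigma w_\sigma$ of the unit flows along maximal chains $\emptyset\subset\{i_1\}\subset\cdots\subset[n]$, each of which is an ordering $\sigma$ with $p_i(S)=\sum_{\sigma:\mathrm{pred}_\sigma(i)=S}w_\sigma$. Substituting back gives $z_i(r,s,f)=\sum_\sigma w_\sigma\,\Delta_i^{\mathrm{pred}_\sigma(i)}f$ with $(r,s)$-dependent weights $w_\sigma\geq 0$ summing to $1$, which is the desired value-variant random order method.
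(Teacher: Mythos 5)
Your proof is correct, and it shares the paper's core technical ingredients: the passage from \ADD to linearity via the Cauchy functional equation with monotonicity supplied by \CON (the paper's Lemma~\ref{cauchy}), the invertibility of the vertex-evaluation matrix (your tensor-product factorization into $2\times 2$ blocks of determinant $s_j-r_j$ is the same computation the paper performs by induction on $|K|$ in Step~4), and the construction of a multilinear function, monotone in $x_i$ on $[r,s]$, supported on a single marginal contribution (the paper's Lemma~\ref{mldelta}) to extract non-negativity from \CON. Where you genuinely diverge is the final assembly. The paper sets up two affine spaces --- attribution functionals cut out by the linear constraints, and edge-weight tuples $c_{j,J}$ cut out by flow conservation --- proves the natural map between them injective, and concludes surjectivity from the dimension count $\dim \R' \geq \dim \AA'$ before matching the inequality constraints on both sides. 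You instead invert the matrix outright to obtain a unique probabilistic-value representation $z_i=\sum_S p_i(S)\Delta_i^S$, read completeness as flow conservation on the Hasse diagram of the Boolean lattice, and invoke the flow decomposition theorem to write the resulting non-negative unit flow as a convex combination of maximal chains, i.e.\ of permutations. This buys two things: it dispenses with the dimension count, and it makes explicit a fact the paper uses without proof when it asserts in Step~2 that the attributions achievable by value-variant random order methods are ``specified by'' the conservation constraints (\ref{ceq1})--(\ref{ceq4}) --- that assertion is precisely the statement that every conservative non-negative unit flow decomposes over permutation paths. The one place you are breezier than you should be is the opening reduction to $r_i\neq s_i$: to treat the degenerate coordinates ``in isolation'' you need the analogue of the paper's constraint (\ref{zeq3}), namely that \ADD and \DUMP force $z_j(r,s,x_I)=r_{I-K}\,z_j(r,s,x_{I\cap K})$, so that the attributions to the non-degenerate coordinates depend only on the restriction of $f$ to the non-degenerate sub-box; this is routine but should be stated.
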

\begin{proof}
Let $V$ be the space of multilinear functions on $x_1, \ldots, x_n$, and note that the monomials $x_I$ give a basis of $V$. Fix some $r, s$, and let $K$ be the set of indices $k$ such that $r_k \neq s_k$.    It will suffice to show that the attributions given by $z_i(r, s, -)$ correspond to the attributions of some value-variant random order method on $(r, s)$.  

\noindent \textit{Step 1: The space of attribution methods} 

Fix an attribution method $z_i$ satisfying \ADD, \DUMP, and \CON, and define $z_{i, I} := z_i(r, s, x_I)$.  By Lemma \ref{cauchy} applied to $z_i$, \ADD and \CON together imply that $z_i(r, s, -)$ is linear and therefore uniquely determined by its values $z_{i, I}$ on the monomials $x_I$.   By \DUMP and completeness, we see that such $z_{i, I}$ satisfy
\begin{align} \label{zeq1}
z_{i, I} &= 0 \text{ if $i \notin I$,}\\ \label{zeq2}
z_{i, I} &= 0 \text{ for $i \notin K$,}\\ \label{zeq3}
z_{i, I} &= r_{I - K} z_{i, I \cap K} \text{, and}\\ \label{zeq4}
\sum_{i \in I \cap K} z_{i, I} &= s_I - r_I.
\end{align}
Here, (\ref{zeq1}) and (\ref{zeq2}) follow immediately from \DUMP, (\ref{zeq3}) follows by noting that 
\[
z_{i, I} - r_{I - K} z_{i, I \cap K} = z_i(r, s, x_I - r_{I - K} x_{I \cap K}) = 0
\]
by \ADD and \DUMP, and (\ref{zeq4}) follows from completeness.  Values of $z_{i, I}$ satisfying constraints (\ref{zeq1}) through (\ref{zeq4}) are completely determined by the values of $z_{i, I}$ with $i \in I$ and $I \subset K$.  In fact, setting $k = |K|$, they form an affine subspace $\AA'$ of dimension
\[
\sum_{I \subset K} \max\{|I| - 1, 0\} = \sum_{i = 1}^{k} \binom{k}{i} (i - 1) = k \sum_{i = 1}^k \binom{k - 1}{i - 1} - (2^k - 1) = k 2^{k-1} - 2^k + 1
\]
inside the space $\Z$ of all tuples $\{z_{i, I}\}$ with $i \in I, I \subset K$.

Considering $\CON$ gives the additional constraint 
\begin{equation} \label{zeq5}
\sum_I a_I z_{i, I} \text{ has the same sign as $(s_i - r_i)$ if $\sum_I a_I x_I$ is non-decreasing in $r_i$}.
\end{equation}
Any $\{z_{i, I}\}$ satisfying (\ref{zeq1}) through (\ref{zeq5}) gives rise to a valid set of attributions on $(r, s)$. Hence, an attribution method satisfying \ADD, \DUMP, and \CON is characterized on the pair of values $(r, s)$ by the closed subspace $\AA$ of $\AA'$ defined by
\[
\AA:= \left\{\{z_{i, I}\} \text{ for } i \in I, I \subset K \mid \sum_{i \in I} z_{i, I} = s_I - r_I \text{ and $z_{i, I}$ satisfy (\ref{zeq5})} \right\}.
\]

\noindent \textit{Step 2: The space of value-variant random order methods} 

Let us now characterize the functionals $z_i(r, s, -)$ given by value-variant random order methods.  The space of attributions on $(r, s)$ which can result from value-variant random order methods is specified by giving for each $J \subset K$ and $j \in J$ a weight $c_{j, J}$ so that
\begin{align} \label{ceq1}
\sum_{i \in I} c_{i, I} &= \sum_{i \notin I} c_{i, I \cup \{i\}} \text{ for $I \neq \emptyset, K$}, \\ \label{ceq2}
\sum_i c_{i, \{i\}} &= 1, \\ \label{ceq3}
\sum_i c_{i, K} &= 1, \text{ and} \\ \label{ceq4}
c_{j, J} &\geq 0.
\end{align}
Note here that (\ref{ceq3}) is implied by (\ref{ceq1}) and (\ref{ceq2}). The space of such methods is therefore the closed subset $\R$ defined by $(\ref{ceq4})$ lying within the affine space $\R'$ defined by $(\ref{ceq1})$ through $(\ref{ceq3})$ inside the space $\C$ of all tuples $\{c_{j, J}\}$ with $j \in J, J \subset K$.  Notice that $\R'$ has dimension at least 
\[
\sum_{i = 1}^k \binom{k}{i} i  - (2^k - 1) = k 2^{k-1} - 2^k + 1.
\]

\noindent \textit{Step 3: Mapping from value-variant random order methods to attribution methods} 

We now understand the map $\phi: \R \to \AA$ between value-variant random order methods and attribution methods; it will be induced by a linear map $\phi: \C \to \Z$.  For $I \subset K$ and $i \in I$, setting $z_{i, I}^{ro} := z_i^{ro}(r, s, x_I)$ and $I' = I - \{i\}$, the map $\phi$ is given explicitly by
\begin{multline} \label{zroeq}
z_{i, I}^{ro} = \sum_{J \ni i} c_{i, J} (s_{I \cap J} r_{I - J} - s_{I \cap J - \{i\}} r_{I - J \cup \{i\}}) \\
	= \sum_{J \ni i} c_{i, J} s_{I \cap J - \{i\}} r_{I - J} (s_i - r_i) 
	= \sum_{J' \subset K - \{i\}} c_{i, J' \cup \{i\}} s_{I' \cap J'} r_{I' - J'} (s_i - r_i).
\end{multline}
Because value-variant random order methods are attribution methods satisfying \ADD, \DUMP, and \CON, the resulting attributions satisfy the constraints (\ref{zeq1}) through (\ref{zeq5}).  

\noindent \textit{Step 4: Checking that $\phi$ is injective}

We now claim that $\phi$ is injective.  By (\ref{zroeq}), the map $\phi$ is given by a $k 2^{k-1} \times k 2^{k-1}$ matrix $\Phi$ such that
\begin{itemize}
\item $\Phi$ is block diagonal with $2^{k-1} \times 2^{k-1}$ blocks, and

\item the $i^\text{th}$ block $\Phi^i$ of $\Phi$ is indexed by subsets of $K - \{i\}$ and has entries
\[
\Phi^i_{I', J'} = (s_i - r_i) s_{I' \cap J'} r_{I' - J'},
\]
where $I', J' \subset K - \{i\}$.
\end{itemize}
We must check that $\Phi^i$ is non-singular for each $i$.  For this, we claim that 
\[
\det \Phi^i = \prod_j (s_j - r_j)^{2^{k-1}}. 
\]
Consider the matrices
\[
A_K := (s_{I \cap J} r_{I - J})_{I, J \subset K}.
\]
We will show by induction on $k$ that 
\[
\det A_K = \prod_{k \in K} (s_k - r_k)^{2^{k-1}},
\]
which obviously implies the desired.   In the base case $k = 1$, we see that
\[
A_K = \left(\begin{matrix} 1 & 1 \\ r_1 & s_1 \end{matrix}\right)
\]
and the conclusion is obvious.  Now suppose the statement for some $k$ and take some $K$ with $|K| = k + 1$.  Then, pick some $j \in K$ and set $K' = K - \{j\}$.  Then, placing $A_K$ into block form, we have
\[
\det(A_K) = \det \left(\begin{matrix} A_{K'} & A_{K'} \\ r_j A_{K'} & s_j A_{K'}\end{matrix}\right) = \det \left(\begin{matrix} A_{K'} & 0 \\ r_j A_{K'} & (s_j - r_j) A_{K'}\end{matrix}\right) = (s_j - r_j)^{2^{k}} \det(A_{K'})^2 = \prod_{i \in K} (s_i - r_i)^{2^k},
\]
completing the induction.

\noindent \textit{Step 5: Putting everything together}

We have now shown that $\phi$ is injective as a map $\C \to \Z$.  Therefore, $\phi: \R' \to \AA'$ is an injective linear map between affine spaces with $\dim \R' \geq \dim \AA'$, hence an isomorphism of affine spaces.  It remains only to show that this isomorphism restricts to $\R \to \AA$; for this, we match the conditions (\ref{zeq5}) and (\ref{ceq4}) to check that $\phi$ maps $\R' - \R$ to $\AA' - \AA$.  

For any $\{c_{j, J}\} \in \R' - \R$, choose $j$ and $J$ with $j \in J \subset K$ such that $c_{j, J} < 0$.  For $I \subset [n]$, define the points $u^I$ and $v^I$ by 
\[
u_i^I = \begin{cases} r_i & i \notin I \\ s_i & i \in I \end{cases} \text{ and } v_i^I = \begin{cases} s_i & i \notin I \\ r_i & i \in I. \end{cases}
\]
By Lemma \ref{mldelta}, we may find a multilinear function $h(x_1, \ldots, \widehat{x}_j, \ldots, x_n)$ so that $h(u_i^I) = \delta_{I, J}$.  Now, $h$ is linear in each $x_i$, hence it is non-negative on $[r, s]$ because it is non-negative on each of the vertices of $[r, s]$.  Therefore, the multilinear function  
\[
g(x) = x_j h(x_1, \ldots, \widehat{x}_j, \ldots, x_n)
\]
satisfies $g(u^J) - g(u^{J - \{j\}}) = s_j - r_j$ and $g(u^I) = g(u^{I - \{j\}})$ for all $I \neq J$.  Further, because $h$ is non-negative on $[r, s]$, $g$ is non-decreasing in $x_j$ on $[r, s]$.  On the other hand, we see that 
\[
z^{ro}_j(r, s, g) = \sum_{I \ni j} c_{j, I} (g(u^I) - g(u^{I - \{j\}})) = c_{j, J} (g(u^J) - g(u^{J - \{j\}})) = c_{j, J}(s_j - r_j),
\]
which has opposite sign from $s_j - r_j$.  This means that the image of $\{c_{j, J}\} \in \R' - \R$ under $\phi$ does not satisfy $\CON$, hence lies in $\AA' - \AA$.  Therefore, we conclude that $\phi$ maps $\R' - \R$ to $\AA' - \AA$, hence $\phi$ maps $\R$ bijectively to $\AA$, as needed.
\end{proof}

\begin{remark}
In the proof of Theorem \ref{pointwise}, \CON plays two different roles.  First, it provides the technical condition that allows us to convert from \ADD to linearity by using Lemma \ref{cauchy}.  Secondly and more crucially, it is necessary because any value-variant random order method is a convex combination of the attributions along the paths $\gamma_{r, s}^\sigma$ rather than an affine combination.  As a result, such a method satisfies \CON.
\end{remark}

Recall from Section~\ref{sec-path} that path attribution methods satisfy \ADD, \DUMP, and \CON for all characteristic functions.  Therefore, for multilinear functions, Theorem \ref{pointwise} implies that all path attribution methods are value-variant random order methods. This is somewhat surprising because random order methods are inherently combinatorial and may be evaluated using the values of the characteristic function at a finite set of points, while path attribution methods require in general a continuous evaluation of the characteristic function. Thus we see that the form of the characteristic function in Theorem~\ref{pointwise} is key in reducing the latter continuous evaluation to a discrete one. See Section~\ref{sec-ass} for an explicit illustration of this in the context of the Aumann-Shapley method.

\subsection{Methods that satisfy \ASI}

The characterization in the previous section allows for significant freedom in the selection of an attribution method, arguably undesirably so. For instance, it is possible to vary the convex combination over the random order paths for each $r,s$ in some discontinuous way. To address this issue, we impose in this subsection a continuity condition on our paths. Following our axiomatic approach, we would like to impose this continuity condition on paths via an axiom on our attribution methods.  A natural candidate, then, is \ASI, as it is a continuity condition on attributions and has a very natural interpretation in the attribution context. With the addition of \ASI, we have the following.

\begin{cor} \label{asirand}
Any attribution method on multilinear functions satisfying \ADD, \DUM, \CON, and \ASI is a random order method.
\end{cor}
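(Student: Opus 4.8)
The plan is to build on Theorem~\ref{pointwise}, which already identifies the methods satisfying \ADD, \DUMP, and \CON on multilinear functions with the value-variant random order methods, and then to use \ASI to strip away the dependence of the weights on $(r,s)$. The first task is to reconcile the hypotheses, since Theorem~\ref{pointwise} assumes \DUMP while here we have only \DUM. On multilinear functions the two become equivalent once \ADD and \CON are available: if a multilinear $f$ does not depend on $x_i$ on $[r,s]$ then $\partial_i f\equiv 0$ there, so $f$ is simultaneously non-decreasing and non-increasing in $x_i$ on $[r,s]$. Applying \CON to $f$ bounds $z_i(r,s,f)$ on one side, while applying \CON to $-f$ (also non-decreasing in $x_i$ on $[r,s]$) together with \ADD and the fact that $z_i(r,s,0)=0$ by \DUM bounds it on the other; hence $z_i(r,s,f)=0$, which is exactly \DUMP. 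Thus Theorem~\ref{pointwise} applies and the method is described on each pair $(r,s)$ by weights $c_{j,J}(r,s)\ge 0$ obeying (\ref{ceq1})--(\ref{ceq4}); equivalently by a distribution $\mu_{r,s}$ over orderings of the active set $K=\{i:r_i\ne s_i\}$. The goal is to show these weights do not depend on $(r,s)$.

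Next I would use \ASI to prove that $\mu_{r,s}$ depends only on the combinatorial type of the pair, i.e.\ on $K$ and on the sign pattern of $s-r$. A coordinatewise orientation-preserving affine map $T$ (acting by $x_j\mapsto c_jx_j+d_j$ with $c_j>0$) carries each hypercube-edge path $\gamma^\sigma_{r,s}$ to $\gamma^\sigma_{T(r),T(s)}$ with the \emph{same} $\sigma$, and by Lemma~\ref{affineasi} each affine single-path method is \ASI-invariant. Expanding both sides of the identity $z_i(r,s,f)=z_i(T(r),T(s),f\circ T^{-1})$ and setting $g=f\circ T^{-1}$, I would obtain
\[
\sum_\sigma \mu_{r,s}(\sigma)\,z_i^{\gamma^\sigma_{T(r),T(s)}}(g)=\sum_\sigma \mu_{T(r),T(s)}(\sigma)\,z_i^{\gamma^\sigma_{T(r),T(s)}}(g)\qquad\text{for all }g.
\]
Since $\phi$ is injective---this is precisely the nonsingularity of the blocks $\Phi^i$ established in Theorem~\ref{pointwise}---the two weight-tuples coincide, so $c_{j,J}(r,s)=c_{j,J}(T(r),T(s))$. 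As any two pairs sharing the same $K$ and the same sign pattern of $s-r$ are related by such a $T$, the weights are constant on each such class.

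The main obstacle is to upgrade this to genuine $(r,s)$-independence, that is, to identify the $2^{|K|}$ sign-pattern classes (and the classes for different $K$) with one another; this is exactly the residual freedom flagged in the discussion preceding the statement. The difficulty is that the transforms furnished by \ASI are orientation-preserving and so never cross between sign patterns of $s-r$, so within-class constancy is all \ASI gives directly. To close the gap I would bring in the orientation-reversing coordinate reflections $x_j\mapsto -x_j$, which also send $\gamma^\sigma$ to $\gamma^\sigma$ and therefore, by the same injectivity argument, identify the weights across sign patterns differing in a single coordinate; iterating over coordinates links all $2^{|K|}$ classes for fixed $K$, and an induction on $|K|$---letting $s_j\to r_j$ and matching the lower-dimensional marginals using the continuity \ASI imposes on attributions---handles the passage between different active sets. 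Once the weights are shown to be a single fixed tuple independent of $(r,s)$, the method is the convex combination of the affine single-path methods $\gamma^\sigma$ with those weights, i.e.\ a random order method. I expect the reflection step to be the delicate point: the axioms as literally stated guarantee invariance only under orientation-preserving rescalings, so making the cross-sign-pattern identification rigorous---by reflections, or by arguing directly that the attribution, being a fixed multilinear expression in $(r,s)$ on each orthant and continuous across the degenerate faces, must have matching coefficients---is where the argument has to do its real work.
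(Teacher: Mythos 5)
Your overall strategy---reduce to Theorem~\ref{pointwise} and then use \ASI to kill the $(r,s)$-dependence of the weights---is the same as the paper's, and two of your steps are sound. Your derivation of \DUMP from \ADD, \DUM, and \CON (applying \CON to both $f$ and $-f$ when $f$ is constant in $x_i$ on $[r,s]$) is correct and is a genuinely different, more robust argument than the paper's, which instead writes $f = f^1 + f^2\, x_i$, uses Zariski density of $[r,s]$ to force $f^2 \equiv 0$, and then invokes \DUM; the paper's version needs $r_i \neq s_i$ for all $i$, yours does not. Your within-orbit step is also right: by Lemma~\ref{affineasi} each single-path method $z^{\gamma^\sigma}$ is \ASI-invariant, so comparing the two expansions and invoking the injectivity of $\phi$ (the nonsingularity of the blocks $\Phi^i$ from Step 4 of Theorem~\ref{pointwise}) shows the weights $c_{j,J}$ are constant on each orbit of the coordinatewise orientation-preserving affine group.

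The gap is the one you flag yourself: passing from orbit-wise constancy to genuine $(r,s)$-independence. The fixes you propose are not available under the axioms as stated---\ASI requires $c>0$, so the reflections $x_j \mapsto -x_j$ are not licensed, and no continuity axiom is assumed, so the limiting argument $s_j \to r_j$ used to match weights across different active sets $K$ has nothing to stand on. Consequently your argument only shows that the method agrees with \emph{some} random order method separately on each sign-pattern class, not that a single convex combination works globally. The paper closes the argument by a different device: it fixes the random order method $z^*$ whose weights match $z$ at the single pair $r^*=0$, $s^*=(1,\ldots,1)$, and argues that since $z$ and $z^*$ both satisfy \ASI they must coincide everywhere. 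This is cleaner in that it compares two \ASI-invariant methods rather than manipulating weight functions, but it too only directly reaches pairs in the \ASI-orbit of $(r^*,s^*)$, i.e.\ pairs with $r_j<s_j$ for all $j$; the paper is silent on the remaining sign patterns and on degenerate pairs. So you have surfaced a real subtlety that the paper's one-line conclusion elides rather than manufactured a false difficulty---but your proof as written does not resolve it, and is therefore incomplete.
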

\begin{proof}
First, we claim that \DUM and \ASI imply \DUMP for $r, s$ such that $r_i \neq s_i$ for any $i$.  Suppose that a characteristic function $f$ does not depend on the value of $r_i$ on $[r, s]$.  We may write
\[
f(r_1, \ldots, r_n) = f^1(r_1,\ldots, \hat{r}_i, \ldots, r_n) + f^2(r_1,\ldots, \hat{r}_i, \ldots, r_n) r_i,
\]
where $f^2(r_1,\ldots, \hat{r}_i, \ldots, r_n) = 0$ on $[r, s]$, which is Zariski dense in $\RR^{n}$, hence $f^2 = 0$ as a polynomial.  This implies that $f = f^1$, so the result holds by \DUM.  

Now, take $r^* = (0, \ldots, 0)$ and $s^* = (1, \ldots, 1)$.  By Theorem \ref{pointwise}, we see that
\[
z_i(r^*, s^*, f) = z_i^*(r^*, s^*, f)
\]
for some random order method $z_i^*$.  Because $z_i$ and $z_i^*$ both satisfy $\ASI$, this implies that $z_i = z_i^*$ is a random order method, as needed.
\end{proof}

\subsection{Main Result}

The characterization in the previous section allows us to treat independent variables asymmetrically. For instance, we could consider only a single random order path in our convex combination. But there appears no \textit{a priori} reason to treat variables asymmetrically, and so we impose \ANON, which gives us the following axiomatization.

\begin{cor}\label{unique-ASI-Anon}
There is a unique attribution method on multilinear functions satisfying \ADD, \DUM, \CON, \ASI, and \ANON.
\end{cor}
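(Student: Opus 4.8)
The plan is to let Corollary~\ref{asirand} do the heavy lifting and then close the argument with a symmetrization over the relabeling action of $S_n$. By Corollary~\ref{asirand}, any attribution method $z$ on multilinear functions satisfying \ADD, \DUM, \CON, and \ASI is a random order method; that is, there are weights $\{c_\sigma\}_{\sigma \in S_n}$ with $c_\sigma \geq 0$ and $\sum_\sigma c_\sigma = 1$ such that $z = \sum_\sigma c_\sigma\, z^{[\sigma]}$, where $z^{[\sigma]}$ denotes the affine single-path attribution method corresponding to the order path $\gamma^\sigma$ of Definition~\ref{SS}. Existence is then immediate: the Shapley-Shubik method is the choice $c_\sigma = 1/n!$, and it satisfies \ADD, \DUM, \CON, and \ASI (as a random order method) as well as \ANON, as recorded in Subsection~\ref{shapley-based}. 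It therefore remains only to show that \ANON forces $z$ to be exactly the Shapley-Shubik method.

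First I would pin down how relabeling acts on the order paths. Unwinding the definition of \ANON together with the formula for $\gamma^\sigma$, relabeling the variables by a permutation $\tau \in S_n$ carries $z^{[\sigma]}$ to $z^{[\sigma\tau]}$: if in the relabeled problem variable $i$ is activated at position $\sigma(i)$, then in the original labelling it is variable $\tau^{-1}(i)$ that is activated at position $\sigma(i)$, which is precisely the order $\sigma\circ\tau$. Consequently, writing $\tau\cdot z$ for the relabeled method, the class of random order methods is closed under relabeling, and the induced action on the weight vector is by (right) translation of the index, with $\tau\cdot z$ having weights $c'_\rho = c_{\rho\tau^{-1}}$.

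With this in hand, uniqueness is a one-line averaging argument. The axiom \ANON asserts that $z$ is invariant under relabeling by every $\tau \in S_n$, i.e.\ $\tau\cdot z = z$ for all $\tau$. Averaging this identity over the group gives
\[
z = \frac{1}{n!}\sum_{\tau \in S_n} \tau\cdot z = \sum_{\sigma}\left(\frac{1}{n!}\sum_{\tau} c_{\sigma\tau^{-1}}\right) z^{[\sigma]} = \sum_\sigma \frac{1}{n!}\, z^{[\sigma]},
\]
since for fixed $\sigma$ the index $\sigma\tau^{-1}$ ranges over all of $S_n$ as $\tau$ does, so each inner sum collapses to $1/n!$. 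The right-hand side is exactly the Shapley-Shubik method, which establishes uniqueness.

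The main obstacle I anticipate is essentially bookkeeping: verifying cleanly that relabeling sends $z^{[\sigma]}$ to $z^{[\sigma\tau]}$, so that the action on the weights is by translation rather than by conjugation. This distinction is the crux of the argument, since the (right) regular action of $S_n$ on itself is transitive, which is precisely what forces the invariant distribution to be uniform; a conjugation action would only force $\{c_\sigma\}$ to be a class function and would fail to yield uniqueness. Once the translation action is confirmed, the averaging step pins down the uniform weights and identifies the unique such method as Shapley-Shubik, i.e.\ the Aumann-Shapley-Shubik method.
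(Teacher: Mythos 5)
Your proof is correct and follows essentially the same route as the paper: reduce via Corollary~\ref{asirand} to random order methods, then use \ANON to single out the Shapley-Shubik method. The only difference is that the paper simply asserts that Shapley-Shubik is the unique random order method satisfying \ANON, whereas you supply the (correct) group-averaging argument, rightly noting that the relabeling action on the order paths is by translation rather than conjugation.
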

\begin{proof}
By Corollary \ref{asirand}, such a method must be a random order method.  But there is a unique random order method satisfying \ANON, the Shapley-Shubik method, as needed.
\end{proof}

\section{The Aumann-Shapley-Shubik method} \label{sec-ass}

Recall from Section~\ref{shapley-based} that the Aumann-Shapley method satisfies all the axioms mentioned in Theorem~\ref{unique-ASI-Anon} for \emph{every} characteristic function, while Corollary~\ref{unique-ASI-Anon} shows that there is a unique method that satisfies these axioms for multilinear functions. This implies that the Aumann-Shapley method coincides with the Shapley-Shubik method for multilinear functions. We note that a proof by direct computation is also possible; for completeness, we show this proof in Appendix \ref{pappend}.

\begin{theorem} \label{AS-SS-agree}
If $f$ is the sum of a multilinear function and an additively separable function, then the Aumann-Shapley (Definition~\ref{AS}) and Shapley-Shubik (Definition~\ref{SS}) attribution methods agree for $f$.
\end{theorem}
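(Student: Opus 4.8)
The plan is to reduce the claim to the two uniqueness results already in hand, using that both methods are additive. Fix any decomposition $f = g + h$ with $g$ multilinear and $h$ additively separable. Since the Aumann--Shapley and Shapley--Shubik methods are both path attribution methods, both satisfy \ADD, so for each of them
\[
z_i(r, s, f) = z_i(r, s, g) + z_i(r, s, h).
\]
It therefore suffices to show that the two methods agree separately on the multilinear summand $g$ and on the additively separable summand $h$; combining the two agreements with this identity yields the theorem.

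For the multilinear part $g$, I would invoke Corollary~\ref{unique-ASI-Anon}. The essential input is that both methods satisfy all five of its hypotheses: each is a path method, hence satisfies \ADD, \DUM, and \CON; each is an affine path method, hence satisfies \ASI by Lemma~\ref{affineasi}; and both satisfy \ANON, exactly as recorded in Section~\ref{shapley-based}. Since Corollary~\ref{unique-ASI-Anon} asserts there is a \emph{unique} attribution method on multilinear functions enjoying these five properties, the Aumann--Shapley and Shapley--Shubik methods must coincide on $g$. For the additively separable part $h$, I would instead appeal to Lemma~\ref{addsepuni}: both methods satisfy \ADD and \DUM, and that lemma gives a unique such method on additively separable functions, so they agree on $h$ as well.

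I do not anticipate a real obstacle, since all of the mathematical content resides in the earlier uniqueness theorems together with the previously verified fact that both candidate methods satisfy the required axioms. The single point meriting care is that the decomposition $f = g + h$ is not unique, because the linear functions belong to both classes; but this is harmless, as additivity lets us split $z_i(r,s,f)$ along whichever fixed decomposition we choose, and the agreement on each summand then holds for that choice. A purely computational verification is of course also possible, and the authors indicate one is deferred to Appendix~\ref{pappend}, but the axiomatic argument above is both shorter and more conceptually transparent.
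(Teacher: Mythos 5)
Your argument is correct and is essentially the paper's own primary derivation: the text immediately preceding Theorem~\ref{AS-SS-agree} deduces the multilinear case from Corollary~\ref{unique-ASI-Anon} exactly as you do, and your use of \ADD together with Lemma~\ref{addsepuni} to dispose of the additively separable summand matches the reduction at the start of the paper's argument. The proof formally labeled as the proof of this theorem (Appendix~\ref{pappend}) is the alternative direct computation, which the paper includes only for completeness.
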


We illustrate the attributions that Aumann-Shapley and Shapley-Shubik yield on small instances of multilinear functions in the following example. 

\begin{example}
For $f(r_1, r_2) = r_1 r_2$, these methods coincide and both methods give:
\[
z_1(r, s, f) = (s_1 - r_1) \frac{r_2 + s_2}{2} \text{ and } z_2(r, s, f) = (s_2 - r_2) \frac{r_1 + s_1}{2}.
\]
In particular, when $r = 0$, both methods correspond to an equal split. For $f(r_1, r_2, r_3) = r_1 r_2 r_3$, the attributions again agree and are
\begin{align*}
z_1(r, s, f) &= (s_1 - r_1)\frac{2 r_2 r_3 + 2 s_2 s_3 + r_2 s_3 + s_2 r_3}{6}, \\
 z_2(r, s, f) &= (s_2 - r_2)\frac{2 r_1 r_3 + 2 s_1 s_3 + r_1 s_3 + s_1 r_3}{6}, \text{ and}\\
 z_3(r, s, f) &= (s_3 - r_3)\frac{2 r_1 r_2 + 2 s_1 s_2 + r_1 s_2 + s_1 r_2}{6}.
\end{align*}
\end{example}

We may now define the \emph{Aumann-Shapley-Shubik} method for characteristic functions that are the sum of a multilinear and an additively separable function as the method equivalent to both the Aumann-Shapley and Shapley-Shubik methods. Summarizing the conclusions of Theorem~\ref{AS-SS-agree} and Corollary~\ref{unique-ASI-Anon}, we obtain the following axiomatic characterization of the Aumann-Shapley-Shubik method.

\begin{cor} \label{axiom-ASS}
For characteristic functions $f$ which are the sum of a multilinear function and an additively separable function, the Aumann-Shapley-Shubik method is the unique method satisfying \ADD, \DUMP, \CON, \ANON and \ASI.
\end{cor}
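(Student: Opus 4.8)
The plan is to assemble this corollary from the pieces already in place, splitting the statement into an existence claim and a uniqueness claim. For existence, I would note that the Aumann-Shapley method is a path method, hence satisfies \ADD, \DUMP, and \CON for every characteristic function, and is an affine path method, hence satisfies \ASI; it also satisfies \ANON. Since Theorem~\ref{AS-SS-agree} shows that the Aumann-Shapley and Shapley-Shubik methods coincide on the class of functions under consideration, the Aumann-Shapley-Shubik method is well defined on this class and satisfies all five axioms. Thus existence is immediate, and the content is entirely in uniqueness.

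For uniqueness, the first observation I would record is that \DUMP implies \DUM: if $f$ is independent of $r_i$ on all of $\RR^n$, then in particular it is independent of $r_i$ on $[r, s]$. Hence any method $z$ satisfying the five axioms of the corollary also satisfies \ADD, \DUM, \CON, \ASI, and \ANON. Now fix $f = m + a$ with $m$ multilinear and $a$ additively separable, and apply \ADD to write $z_i(r, s, f) = z_i(r, s, m) + z_i(r, s, a)$ for each $i$. The restriction of $z$ to multilinear functions satisfies \ADD, \DUM, \CON, \ASI, and \ANON (these axioms quantify over individual characteristic functions, and sums of multilinear functions are again multilinear), so Corollary~\ref{unique-ASI-Anon} pins down $z_i(r, s, m)$ as the Shapley-Shubik, equivalently Aumann-Shapley, attribution. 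Likewise, the restriction of $z$ to additively separable functions satisfies \ADD and \DUM, so Lemma~\ref{addsepuni} pins down $z_i(r, s, a) = a_i(s_i) - a_i(r_i)$. Both summands are therefore uniquely determined, and they match the corresponding summands of the Aumann-Shapley-Shubik method, which obeys the same two reductions by its own \ADD, \DUM, and \ANON; so $z$ agrees with it.

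The one point that requires a moment of care — and the closest thing to an obstacle — is that the decomposition $f = m + a$ is not unique: the constant and degree-one parts of $m$ are simultaneously multilinear and additively separable, so they may be shuffled between the two summands. This is harmless, however, because the argument determines $z_i(r, s, m)$ and $z_i(r, s, a)$ separately and unconditionally, so any choice of decomposition yields the same total $z_i(r, s, f)$ and the value is well defined independent of the splitting. I would state this explicitly to close the argument. It bears emphasizing that the genuine mathematical work has already been carried out in Theorem~\ref{pointwise} and the corollaries deduced from it; this final corollary is a packaging step that combines the multilinear uniqueness result with the additively separable uniqueness result of Lemma~\ref{addsepuni} through \ADD, together with the coincidence of the two named methods from Theorem~\ref{AS-SS-agree}.
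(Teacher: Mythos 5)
Your proposal is correct and follows essentially the same route as the paper, which derives this corollary by combining Theorem~\ref{AS-SS-agree} (existence and coincidence of the two methods) with Corollary~\ref{unique-ASI-Anon} (uniqueness on multilinear functions). The only addition is your explicit handling of the additively separable summand via \ADD and Lemma~\ref{addsepuni}, together with the remark on the non-uniqueness of the decomposition $f = m + a$ --- a detail the paper leaves implicit but which is exactly in its spirit.
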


\begin{remark}
Corollary~\ref{unique-ASI-Anon} and the fact that the Shapley-Shubik method satisfies \MON together imply that the Aumann-Shapley-Shubik method satisfies \MON. Further, Sprumont and Wang~\cite{SW} show that the Shapley-Shubik method satisfies a property stronger than \ASI called \axiom{Ordinal Invariance}, meaning that the Shapley-Shubik method is invariant under all order-preserving (monotone) reparameterizations of the variables. Corollary~\ref{unique-ASI-Anon} implies that this carries over to the Aumann-Shapley-Shubik method.
\end{remark}

\subsection{When do Aumann-Shapley and Shapley-Shubik agree?}

Having identified the Aumann-Shapley-Shubik method as a uniquely desirable one for characteristic functions which are the sum of a multilinear function and an additively separable function, we now consider when it exists.  As we show in the following Theorem~\ref{converse}, this will occur only if the characteristic function $f$ takes this form.

\begin{theorem} \label{converse}
If the Aumann-Shapley and Shapley-Shubik attribution methods agree for some cost function $f$, then $f$ is the sum of a multilinear function and an additively separable function.
\end{theorem}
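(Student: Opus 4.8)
The plan is to show that agreement of the two methods forces $\partial_i^2\partial_j f \equiv 0$ for every ordered pair $i\neq j$, which (as I verify at the end) is exactly the condition characterizing sums of a multilinear and an additively separable function. The first step is a reduction to $n=2$. Fix a pair $i\neq j$ and arbitrary constants $a_k$ for $k\notin\{i,j\}$, and restrict attention to pairs $(r,s)$ that agree with each other and equal $a_k$ off coordinates $i,j$. The Aumann--Shapley straight-line path $r+t(s-r)$ then stays in the plane $\{x_k=a_k\}$, so $z_i(r,s,f)$ for this method reduces to the two-variable Aumann--Shapley attribution for $\tilde f(x_i,x_j):=f|_{x_k=a_k}$. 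For Shapley--Shubik, each order path contributes to variable $i$ a difference of $f$ across the edge in direction $i$, with the coordinates ordered before $i$ set to their $s$-value; since $r_k=s_k=a_k$ for $k\notin\{i,j\}$, this contribution depends only on whether $j$ precedes $i$. Exactly half of the $n!$ orders place $j$ before $i$, so $z_i(r,s,f)$ for Shapley--Shubik reduces to the two-variable Shapley--Shubik attribution for $\tilde f$. Hence agreement for $f$ implies agreement of the two-variable methods for every such $\tilde f$, and it suffices to treat $n=2$.

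In the two-variable case, $z_1^{AS}(r,s)=(s_1-r_1)\int_0^1 \partial_1 f(r+t(s-r))\,dt$ and $z_1^{SS}(r,s)=\tfrac12\,(f(s_1,r_2)-f(r_1,r_2)+f(s_1,s_2)-f(r_1,s_2))$. Writing $s=r+(h,k)$ and Taylor-expanding both sides in $h,k$ about $r$ (using $\int_0^1 t\,dt=\tfrac12$ and $\int_0^1 t^2\,dt=\tfrac13$), all coefficients match through second order and in the pure powers of $h$, but the coefficient of $h^2k$ is $\tfrac13\,\partial_1^2\partial_2 f(r)$ on the Aumann--Shapley side and $\tfrac14\,\partial_1^2\partial_2 f(r)$ on the Shapley--Shubik side. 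Their equality forces $\partial_1^2\partial_2 f(r)=0$; letting $r$ and the constants $a_k$ range over all values gives $\partial_i^2\partial_j f\equiv 0$ on $\RR^n$ for every ordered pair $i\neq j$.

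To finish, note that the vanishing of $\partial_i^2\partial_j f$ for all $j\neq i$ says that $\partial_i^2 f$ is independent of every variable other than $r_i$, so $\partial_i^2 f=\phi_i(r_i)$. Choosing $a_i$ with $a_i''=\phi_i$ and setting $a:=\sum_i a_i$, which is additively separable, the function $g:=f-a$ satisfies $\partial_i^2 g=0$ for every $i$. A routine induction on $n$ shows that any such $g$ is affine in each variable separately, hence multilinear. Therefore $f=g+a$ is the sum of a multilinear and an additively separable function, as claimed.

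I expect the crux to be the two-variable coefficient comparison. The delicate point is verifying that the lower-order terms match identically (so that no spurious constraint appears and the forward direction of Theorem~\ref{AS-SS-agree} is respected) while the $h^2k$ coefficients differ by precisely $\tfrac13$ versus $\tfrac14$, so that their equality is the nontrivial condition $\partial_1^2\partial_2 f=0$ rather than an automatic identity. This argument tacitly requires $f$ to be smooth enough for a third-order expansion, consistent with the paper's standing regularity assumption on mixed partials, and the Shapley--Shubik averaging used in the reduction must be handled carefully so as to isolate the dependence on the single relevant order relation.
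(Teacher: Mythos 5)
Your proposal is correct, but the core $n=2$ step is argued by a genuinely different route than the paper's. The paper converts the identity $z_2^{AS}=z_2^{SS}$ into an equality of area integrals, $\int_{T_1}\partial_{12}f=\int_{T_2}\partial_{12}f$ over the two triangles cut out of $[r,s]$ by the diagonal, via Stokes' theorem, and then runs a geometric subdivision-and-translation argument to show that a non-constant continuous $\partial_{12}f$ would violate this family of identities; it thus concludes directly that $\partial_{12}f$ is constant. You instead Taylor-expand both attributions in $s-r=(h,k)$ and observe that the first place the two expansions can disagree is the $h^2k$ (and $hk^2$) coefficient, where the mismatch $\tfrac13$ versus $\tfrac14$ forces $\partial_{112}f\equiv 0$; I have checked these coefficients and they are right, and your reduction of the $n$-variable Shapley--Shubik attribution to the two-variable one (freezing $r_k=s_k=a_k$ and noting that a permutation's contribution depends only on whether $j$ precedes $i$, which happens for exactly half the orders) is actually spelled out more carefully than in the paper, which performs the same restriction implicitly. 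The trade-offs: your local argument is shorter and more elementary, but it needs $f\in C^3$ so that the third-order coefficients exist and the Peano expansions are unique, whereas the paper's integral argument only needs continuity of $\partial_{12}f$ and establishes its constancy without presupposing third derivatives; on the other hand, the paper's Stokes'-theorem formulation is what generalizes to comparing arbitrary pairs of path methods (as discussed in the remark following the theorem). Your closing step, deducing the structural decomposition from $\partial_{iij}f=0$, is a correct independent re-proof of the paper's Lemma~\ref{char}.
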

\begin{proof}
By Lemma~\ref{char}, it suffices for us to show that $\partial_{iij} f = 0$ for distinct $i, j$.  We first consider the case $n = 2$, in which case we wish to show that $\partial_{12}f$ is constant.  Then, for any $r = (r_1, r_2)$ and $s = (s_1, s_2)$ with $r \leq s$, the Aumann-Shapley attribution to the second variable is
\[
z_2^{AS}(r, s, f) = \int_0^1 \partial_2f(\gamma(t)) \gamma_2'(t) dt
\]
with $\gamma(t) = (1 - t)r + t s$.  On the other hand, the Shapley-Shubik attribution is 
\[
z_2^{SS}(r, s, f) = \frac{1}{2}[f(s_1, s_2) - f(s_1, r_2)] + \frac{1}{2}[f(r_1, s_2) - f(r_1, r_2)].
\]
Subdivide the rectangle $R$ with vertices at $(r_1, r_2)$, $(r_1, s_2)$, $(s_1, r_2)$, and $(s_1, s_2)$ into the triangular regions $T_1$ lying above the path of $\gamma$ and $T_2$ lying below the path of $\gamma$ as shown in Figure~\ref{figa} below.

\begin{figure}[!ht]
\begin{center}
\subfigure[Regions $T_1$ and $T_2$]{\label{figa}\includegraphics[height=1.34in]{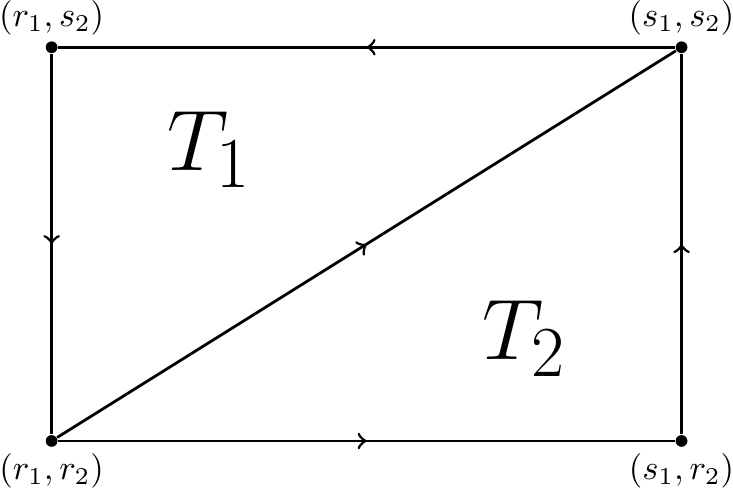}}
\subfigure[Removing two pairs of triangles.]{\label{figb}\includegraphics[height=1.34in]{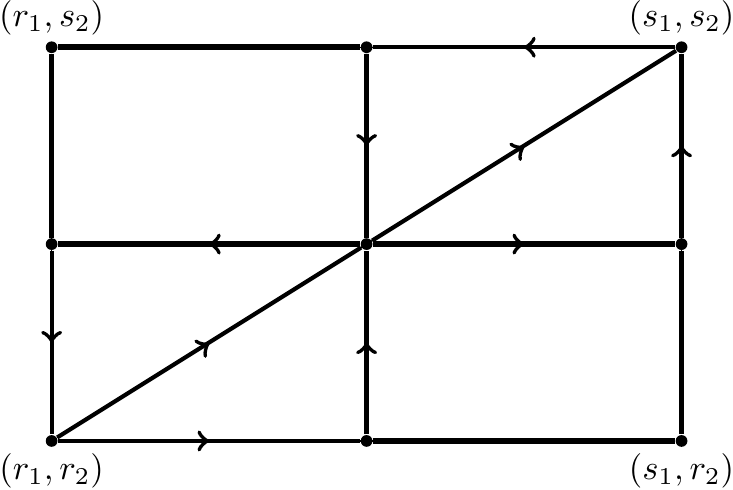}}
\subfigure[Rectangles to apply (\ref{eq2}) on.]{\label{figc}\includegraphics[height=1.34in]{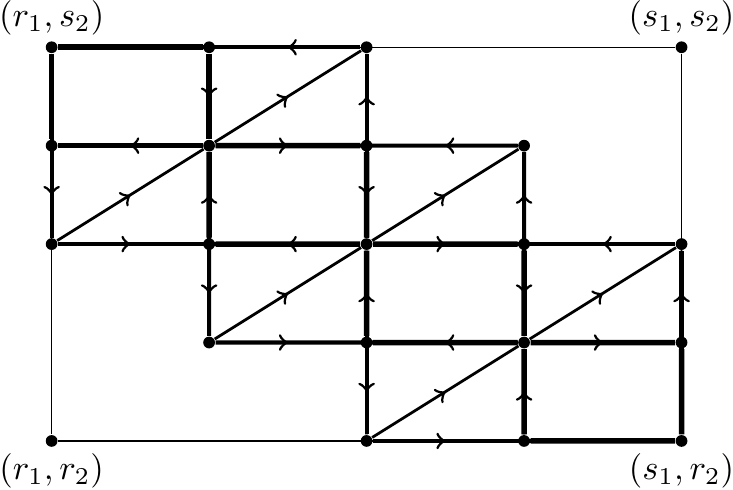}}
\end{center}
 \caption{Steps in the proof of Theorem~\ref{converse}}
  \label{pfstep}
\end{figure}

Then, by Stokes' Theorem, we have
\[
\int_{T_1} \partial_{12} f(x_1, x_2) dx_1 dx_2 = \int_{\partial T_1} \partial_2 f(x_1, x_2) dx_2 =  \int_0^1 \partial_2 f(\gamma(t)) \gamma_2'(t) dt - [f(r_1, s_2) - f(r_1, r_2)]
\]
and
\[
\int_{T_2} \partial_{12} f(x_1, x_2) dx_1 dx_2 = \int_{\partial T_2} \partial_2 f(x_1, x_2) dx_2 = [f(s_1, s_2) - f(s_1, r_2)] - \int_0^1 \partial_2 f(\gamma(t)) \gamma_2'(t) dt.
\]
Because $z_2^{SS}(r, s, f) = z_2^{AS}(r, s, f)$ by assumption, subtracting the two previous equations and applying our previous computations gives that
\begin{equation} \label{eq1}
\int_{T_1} \partial_{12} f(z_1, z_2) dz_1 dz_2 = \int_{T_2} \partial_{12} f(z_1, z_2) dz_1 dz_2
\end{equation}
for any choice of $r, s$. In particular, applying (\ref{eq1}) for the pairs $(r, s)$, $\left(r, \frac{r + s}{2}\right)$, and $\left(\frac{r + s}{2}, s\right)$ and subtracting the result of the latter two from the first, we obtain 
\begin{equation} \label{eq2}
\int_{\left[r_1, \frac{r_1 + s_1}{2}\right] \times \left[\frac{r_2 + s_2}{2}, s_2\right]} \partial_{12} f = \int_{\left[\frac{r_1 + s_1}{2}, s_1\right] \times \left[r_2, \frac{r_2 + s_2}{2}\right]} \partial_{12} f
\end{equation}
for all $r, s$.  The results of this process are shown in Figure~\ref{figb}.  Now, for any $x = (x_1, x_2)$, set $x' = (x_1, -x_2)$.  Applying (\ref{eq2}) to the pairs $(r, r + 2x), (r + x', r + x' + 2x), \ldots, (r + n x', r + n x' + 2x)$, we find that for any $n$ we have
\begin{equation} \label{eq3}
\int_{[r_1, r_1 + x_1] \times [r_2 + x_2, r_2 + 2x_2]} \partial_{12} f = \int_{[r_1 + (n+1)x_1, r_1 + (n+2)x_1] \times [r_2 - nx_2, r_2 - (n-1)x_2]} \partial_{12} f.
\end{equation}
This process is shown in Figure~\ref{figc}.

Suppose now for the sake of contradiction that $\partial_{12} f$ were not constant.  Then, there must exist some $r < s$ such that $\partial_{12}f(r) \neq \partial_{12} f(s)$.  Suppose without loss of generality that $\partial_{12} f(r) > \partial_{12} f(s)$.  Because $\partial_{12} f$ is continuous, we may find open neighborhoods $U$ of $r$ and $V$ of $s$ such that $\partial_{12} f(x) > \partial_{12} f(y)$ for $x \in U, y \in V$.   Now, choose $x = (x_1, x_2)$ and $n$ so that $[r_1, r_1 + x_1] \times [r_2 + x_2, r_2 + 2x_2] \subset U$ and that $[r_1 + (n+1)x_1, r_1 + (n+2)x_1] \times [r_2 - nx_2, r_2 - (n-1)x_2] \subset V$, in which case (\ref{eq3}) provides a contradiction.  Therefore, $\partial_{12} f$ is constant, which completes the proof in the case $n = 2$.

For the general case, choose any two variables $r_i$ and $r_j$.  Restricting to attributions between points with all other variables held fixed, the $n = 2$ case tells us that $\partial_{ij} f$ is independent of $r_i$ and $r_j$, which means exactly that $\partial_{iij} f = 0$ and $\partial_{ijj}f = 0$.  This holds for all $i, j$, so $f$ takes the desired form.
\end{proof}

One implication of Theorem \ref{converse} is that we will need a different axiomatization for characteristic functions that are not the sum of an additive and a multilinear function. Additionally, it justifies our restriction to sums of multilinear and additive characteristic functions.

\begin{remark}
The proof of Theorem \ref{converse} relied heavily on Stokes' theorem.   In fact, this approach works more generally to compare general path attribution methods; we summarize the idea briefly here.  Consider a single-path attribution method corresponding to a family of paths $\gamma_{r,s}$.  Letting $I_{r, s}$ be the (closed) image of $\gamma_{r, s}$ in $[r, s]$, we see that the attributions are given by 
\begin{equation} \label{attrib-line}
z_i(r, s) = \int_0^1 \partial_i f(\gamma_{r, s}(t)) \gamma_{r, s, i}'(t) dt = \int_{I_{r, s}} \partial_i f(r) dr_i,
\end{equation}
where we view $\partial_i f(r)\, dr_i$ as a differential form on $I_{r, s}$.  From this perspective, it is clear that $z_i(r, s)$ depends only on the underlying set $I_{r, s}$ of the path and not on the choice of parametrization $\gamma_{r, s}$.  

We can now use this viewpoint to compare methods.  Consider the case $n = 2$.  Let $\gamma^1_{r, s}$ and $\gamma^2_{r, s}$ be two families of paths and consider the corresponding single-path attribution methods.  If these methods coincide for some characteristic function $f$, then for all $r, s$, we have for all $i$ that 
\[
z_i(r, s, f) = \int_{I^1_{r, s}} \partial_i f dr_i = \int_{I^2_{r, s}} \partial_i f dr_i,
\]
where $I^1_{r, s}$ and $I^2_{r, s}$ are the images in $[r, s]$ of $\gamma_{r, s}^1$ and $\gamma_{r, s}^2$, respectively.  Suppose for simplicity that the closed curve formed by first traversing $\gamma_{r, s}^1$ and then traversing $\gamma_{r, s}^2$ is not self-intersecting.  Then, it bounds an open set $A_{r, s}$ in $[r, s]$.  From (\ref{attrib-line}), we then find that 
\begin{equation} \label{attrib-2d}
0 = \int_{I^1_{r, s}} \partial_2 f dr_2 - \int_{I^2_{r, s}} \partial_2 f dr_2 = \int_{A_{r, s}} \partial_{12}f dr_1 dr_2,
\end{equation}
where the final equality follows from Stokes' Theorem.  We have therefore translated condition (\ref{attrib-line}) involving line integrals to condition (\ref{attrib-2d}) involving area integrals.  In the situation of Theorem \ref{converse}, this condition is (\ref{eq1}), which we may analyze by elementary means because $T_1$ and $T_2$ are geometrically quite simple.  The general case seems to require different techniques; some ongoing work in this direction by the authors uses an approach involving tools from wavelet theory.
\end{remark}

\subsection{Computing Aumann-Shapley-Shubik}

In this subsection, we discuss the efficient computation of the Aumann-Shapley-Shubik method for multilinear functions.\footnote{We ignore additively separable functions because the attribution assigned to a variable is simply the change in the function in which it appears.} As discussed at the end of Subsection~\ref{adcmethods}, if $f$ is a multilinear function, then this method is computable in finite time because it coincides with the Shapley-Shubik method. Indeed, the attributions given by the Shapley-Shubik method are the average of the marginal impact of changing a variable over the finite number of possible variable orderings. However, there does not always exist an efficient (polynomial time) algorithm to compute the Shapley-Shubik attributions (see the hardness results in~\cite{Deng, Matsui}).

Now, for $f(r) = r_1 \cdot \cdots \cdot r_n$, the most basic example of a multilinear function, the Aumann-Shapley-Shubik attributions $z_i(r, s, f)$ are computable in finite time, as to compute the Shapley-Shubik attributions in this case it suffices to evaluate $f$ a finite number of times.  In principle, this may involve $\Theta(2^n)$ evaluations, one for each of the vertices of $[r, s]$.  However, Theorem~\ref{compute} below implies that in this case we may compute attributions in time quadratic in the number of variables.  If we instead consider general multilinear functions, iterating the algorithm of Theorem~\ref{compute} in Corollary~\ref{compute-iterate} yields runtime quadratic in the number of variables and linear in the number of non-zero monomials in the characteristic function.  These two results together ensure that our attribution theory is not impractical for computational reasons.

\begin{theorem} \label{compute}
Let $f(r) = r_1\cdots r_n$.  Then, for any $r, s$ and each $i$, the Aumann-Shapley-Shubik attribution $z_i(r, s, f)$ is computable in $O(n^2)$ time and $O(n)$ memory.
\end{theorem}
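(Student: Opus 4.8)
The plan is to work directly with the Aumann-Shapley integral representation, which is legitimate here since Theorem~\ref{AS-SS-agree} guarantees that the two methods coincide for the multilinear $f(r) = r_1 \cdots r_n$. Writing the linear path $\gamma_{r,s}(t) = r + t(s - r)$ and using $\partial_i f = \prod_{j \neq i} r_j$, the defining formula \eqref{speq} becomes
\[
z_i(r, s, f) = (s_i - r_i) \int_0^1 \prod_{j \neq i} \big(r_j + t(s_j - r_j)\big)\, dt.
\]
Thus the whole task reduces to evaluating, for each $i$, the integral of the leave-one-out product of the affine functions $\ell_j(t) := r_j + t(s_j - r_j)$. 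Setting $a_j = r_j$ and $b_j = s_j - r_j$, each $\ell_j$ is linear in $t$, each leave-one-out product is a polynomial of degree $n-1$, and its integral over $[0,1]$ is the weighted coefficient sum $\sum_k q_k/(k+1)$.

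First I would dispose of the degenerate indices. If $s_i = r_i$ then $b_i = 0$ and the prefactor kills the attribution, so $z_i = 0$ directly (consistent with \DUMP); likewise, if some variable is frozen at zero, i.e. $r_j = s_j = 0$ for some $j$, then every leave-one-out product for $i \neq j$ retains the identically-zero factor $\ell_j$, so all attributions vanish. After this preprocessing I may assume every $\ell_j$ is a nonzero linear polynomial.

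The key computational idea is to avoid recomputing each leave-one-out product from scratch (which would cost $O(n^3)$ time) and to avoid storing all prefix products at once (which would cost $O(n^2)$ memory). Instead, I would compute the single global product $P(t) = \prod_{j=1}^n \ell_j(t)$ once, multiplying in the linear factors one at a time: after $k$ factors the running polynomial has $k+1$ coefficients, absorbing the next factor costs $O(k)$, for $O(n^2)$ total time and only $O(n)$ storage for the coefficient array. Then, for each surviving index $i$, the desired leave-one-out product is exactly $Q_i(t) = P(t)/\ell_i(t)$, an exact polynomial division by a linear factor: since $\ell_i$ has root $\rho_i = -a_i/b_i$ and $P(\rho_i) = 0$, synthetic division by $(t - \rho_i)$ produces $P/(t-\rho_i)$ in $O(n)$ time and $O(n)$ scratch space, and rescaling by $1/b_i$ gives $Q_i$. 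Integrating $Q_i$ against the weights $1/(k+1)$ and multiplying by the prefactor $b_i$ then yields $z_i(r,s,f)$ in a further $O(n)$ time (the two factors of $b_i$ in fact cancel). Processing the indices one at a time while reusing a single scratch buffer, the whole computation runs in $O(n^2)$ time and $O(n)$ memory.

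I expect the only genuine obstacle to be organizing the leave-one-out evaluations so as to meet the $O(n)$ memory bound rather than the naive $O(n^2)$; this is exactly what the ``compute $P$ once, then divide out a single factor'' trick resolves, with the degenerate cases $b_i = 0$ and identically-zero $\ell_j$ handled separately as above so that the division is always well defined.
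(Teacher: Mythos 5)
Your proposal is correct, but it reaches the bound by a genuinely different route than the paper. The paper works from the Shapley--Shubik side of Theorem~\ref{AS-SS-agree}: it starts from the closed form $z_i = \frac{1}{n!}(s_i-r_i)\sum_{k}k!(n-1-k)!\sum_{|K|=k}s_K r_{[n]-\{i\}-K}$ and computes the inner sums $X_{k,m}=\sum_{K\subset[m],\,|K|=k}s_K r_{[m]-K}$ by the dynamic program $X_{k,m}=s_m X_{k-1,m-1}+r_m X_{k,m-1}$. You instead work from the Aumann--Shapley integral, expanding $P(t)=\prod_j\bigl(r_j+t(s_j-r_j)\bigr)$ in the monomial basis and integrating the leave-one-out quotient termwise. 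These are close cousins: since $r_j+t(s_j-r_j)=r_j(1-t)+s_jt$, the paper's $X_{k,m}$ are exactly the coefficients of $\prod_{j\le m}\ell_j(t)$ in the Bernstein basis $t^k(1-t)^{m-k}$, and its weights $k!(n-1-k)!/n!$ are the integrals $\int_0^1 t^k(1-t)^{n-1-k}\,dt$ of Lemma~\ref{binom-int}; your coefficient array carries the same information in the monomial basis. Where you genuinely gain something is the leave-one-out step: the paper implicitly reruns its recursion on the other $n-1$ variables for each $i$ (so all $n$ attributions cost $O(n^3)$ if done naively), whereas your ``build $P$ once, then synthetically divide out $\ell_i$'' trick recovers each quotient in $O(n)$ and yields all $n$ attributions in $O(n^2)$ total. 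Your degenerate-case handling is sound, with one small imprecision: after preprocessing you may still have factors $\ell_j$ that are nonzero \emph{constants} (when $r_j=s_j\neq 0$), not linear polynomials with a root; this is harmless because the only factor you ever divide out is $\ell_i$ for the index being attributed, and for that index you have already ensured $s_i\neq r_i$, so $b_i\neq 0$ and the synthetic division is well defined.
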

\begin{proof}
From the calculations in the computational proof of Theorem~\ref{AS-SS-agree} given in Appendix \ref{pappend}, the attributions are given by
\begin{align*}
z_i(r, s, f) &= \frac{1}{n!} (s_i - r_i) \sum_{K \subset [n] - \{i\}} |K|!(n - 1 - |K|)! s_K r_{[n] - \{i\} - K} \\
	&= \frac{1}{n!} (s_i - r_i) \sum_{k = 0}^{n-1} k! (n - 1 - k)! \sum_{\substack{K \subset [n] - \{i\} \\ |K| = k}} s_K r_{[n] - \{i\} - K},
\end{align*}
so it suffices to compute this value.  The computation is invariant under relabeling of coordinates, so we may assume for convenience of notation that $i = n$.  In this case, we have
\[
z_n(r, s, f) = \frac{1}{n!} (s_n - r_n) \sum_{k = 0}^{n-1} k! (n - 1 - k)! \sum_{\substack{K \subset [n-1] \\ |K| = k}} s_K r_{[n - 1] - K}.
\]
Our approach is to compute the sums
\[
X_{k, m} := \sum_{\substack{K \subset [m] \\ |K| = k}} s_K r_{[m] - K}
\]
for $m \leq n-1$ and $0 \leq k \leq m$ using dynamic programming.  Computing $z_i(r, s, f)$ then requires only a simple summation.  Algorithm~\ref{AS-SS-compute} formalizes this idea.

\begin{algorithm}[!ht]
\caption{Computing the Aumann-Shapley-Shubik attribution $z_n(r, s, f)$.}
\label{AS-SS-compute}
\begin{algorithmic}
\STATE $X_{0, 0} \gets 1$
\FOR{$m = 1$ to $n-1$}
	\STATE $X_{0, m} \gets r_m \cdot X_{0, m-1}$
	\FOR{$k = 1$ to $m-1$}
		\STATE $X_{k, m} \gets s_m \cdot X_{k-1, m-1} + r_m \cdot X_{k, m-1}$
	\ENDFOR
	\STATE $X_{m, m} \gets s_m \cdot X_{m-1, m-1}$
\ENDFOR
\RETURN $\frac{1}{n!}(s_n - r_n) \sum_{k = 0}^{n-1} k! (n - 1 - k)! \cdot X_{k, n-1}$
\end{algorithmic}
\end{algorithm}

The correctness of Algorithm~\ref{AS-SS-compute} follows from the evident recursion
\[
X_{k, m} = \begin{cases} r_m \cdot X_{0, m-1} & k = 0 \\ s_m \cdot X_{k-1, m-1} + r_m \cdot X_{k, m-1} & 1 \leq k \leq m -1 \\ s_m \cdot X_{m-1, m-1} & k = m \end{cases}
\]
and the expression for $z_i(r, s, f)$ obtained at the beginning of the proof.  There are $O(n^2)$ iterations of the loop, each taking $O(1)$ time to update $X_{k,m}$, giving a total runtime of $O(n^2)$.  Further, at each step, only the values of $X_{k, m}$ for $0 \leq k \leq m$ and $X_{k, m-1}$ for $0 \leq k \leq m-1$ are required; storing only these yields a memory requirement of $O(n)$.
\end{proof}

\begin{cor} \label{compute-iterate}
Let $f$ be a multilinear characteristic function in $n$ variables with $N$ non-zero monomial terms.  Then, the Aumann-Shapley-Shubik attribution $z_i(r, s, f)$ is computable in $O(n^2\cdot N)$ time and $O(n)$ memory.
\end{cor}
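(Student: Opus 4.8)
The plan is to reduce the general multilinear case to the single-monomial case handled by Theorem~\ref{compute} using the linearity of the Aumann-Shapley-Shubik method. First I would write $f$ in the form $f(r_1, \ldots, r_n) = \sum_I c_I \prod_{j \in I} r_j$, where the sum ranges over the $N$ subsets $I \subset [n]$ with $c_I \neq 0$. Since the Aumann-Shapley method is given by the integral $z_i(r, s, f) = \int_0^1 \partial_i f(\gamma_{r,s}(t)) \gamma_{r,s,i}'(t)\,dt$, which is manifestly linear in $f$ (equivalently, this linearity follows from \ADD together with the argument of Lemma~\ref{cauchy}), we obtain
\[
z_i(r, s, f) = \sum_I c_I\, z_i(r, s, x_I).
\]
It therefore suffices to compute each $z_i(r, s, x_I)$ efficiently and to sum the scaled results.

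Next I would dispose of the vanishing terms. For any monomial $x_I$ with $i \notin I$, the function does not depend on $r_i$, so \DUM gives $z_i(r, s, x_I) = 0$; these terms may be skipped. For a monomial $x_I$ with $i \in I$, the variables outside $I$ are dummies, so the attribution to $i$ equals the attribution for the product $\prod_{j \in I} r_j$ of the $|I|$ live variables. This is precisely the setting of Theorem~\ref{compute} applied to the $|I|$-variable product function, which computes $z_i$ in $O(|I|^2) = O(n^2)$ time and $O(n)$ memory; multiplying the output by the scalar $c_I$ is a single operation.

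Finally I would assemble the pieces. Iterating over the at most $N$ monomials containing $i$, each contributing an $O(n^2)$-time computation via Theorem~\ref{compute}, and accumulating the scaled results into a running total for $z_i(r, s, f)$, yields total running time $O(n^2 \cdot N)$. Since the monomials are processed one at a time and the workspace of Algorithm~\ref{AS-SS-compute} (the table $X_{k,m}$, stored one row at a time) can be reused across monomials, the memory requirement remains $O(n)$.

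The only point requiring care is the linearity step: we must confirm that the method distributes over the monomial decomposition with the coefficients $c_I$ pulled out as scalar multipliers. This is immediate for the Aumann-Shapley method from the linearity of the defining path integral in $f$, and it is exactly the additivity already exploited throughout Section~\ref{sec-main}, so no new difficulty arises. The remaining bookkeeping—restricting Theorem~\ref{compute} to the index set $I$ and bounding $|I| \leq n$—is routine.
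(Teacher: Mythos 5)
Your proposal is correct and matches the paper's proof, which likewise invokes \ADD and \DUM to run the algorithm of Theorem~\ref{compute} once per non-zero monomial and sum the contributions; you simply spell out the details (skipping monomials not containing $i$, restricting to the index set $I$, and justifying that scalar coefficients pull out) that the paper leaves implicit.
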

\begin{proof}
By \ADD and \DUM, we may simply run the algorithm of Theorem~\ref{compute} $N$ times, once for each non-zero monomial in $f$, and sum the resulting contributions.  This trivially gives the desired runtime and memory costs.
\end{proof}

\bibliographystyle{acm}
\bibliography{attribution}

\appendix

\section{A review of Stokes' theorem} \label{stokes}

In this appendix, we give a brief intuitive introduction to Stokes' theorem as it relates to our paper for the unfamiliar reader.  To minimize technical difficulties, we restrict ourselves to the case of dimension two, where Stoke's Theorem coincides with Green's Theorem, and suppress technical assumptions.  First, we state a basic version of the theorem.  

\begin{theorem}[Stokes' Theorem] \label{st}
Let $A$ be the region enclosed by a smooth closed curve in the plane.  Let $f$ be a differentiable function defined on an open neighborhood of $A$, and let $\partial A$ be the (oriented) boundary of $A$.  Then, we have
\begin{equation} \label{ststate}
\int_{\partial A} f dx_2 = \int_A \partial_1 f dx_1 dx_2.
\end{equation}
\end{theorem}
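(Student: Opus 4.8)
The plan is to prove (\ref{ststate}) first for regions of an especially simple shape, where everything reduces to one-dimensional calculus, and then to bootstrap to the general case by a decomposition argument. Call a region \emph{$x_1$-simple} if there is an interval $[c, d]$ and continuous functions $\alpha \leq \beta$ on $[c, d]$ such that
\[
A = \{(x_1, x_2) : c \leq x_2 \leq d, \ \alpha(x_2) \leq x_1 \leq \beta(x_2)\};
\]
that is, $A$ is bounded on the left and right by the graphs $x_1 = \alpha(x_2)$ and $x_1 = \beta(x_2)$ and possibly by horizontal segments at top and bottom. The first step is to verify (\ref{ststate}) for such regions by computing both sides explicitly.

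For an $x_1$-simple region, I would handle the right-hand side with Fubini's theorem, integrating in $x_1$ first and applying the fundamental theorem of calculus to obtain
\[
\int_A \partial_1 f \, dx_1 dx_2 = \int_c^d \bigl( f(\beta(x_2), x_2) - f(\alpha(x_2), x_2) \bigr) \, dx_2.
\]
For the left-hand side, I would parametrize $\partial A$ counterclockwise, so that $A$ lies to the left. The right boundary $x_1 = \beta(x_2)$ is then traversed with $x_2$ increasing and the left boundary $x_1 = \alpha(x_2)$ with $x_2$ decreasing, while along any horizontal top or bottom segment $x_2$ is constant, so $dx_2 = 0$ and the contribution vanishes. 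Summing the two curved pieces reproduces exactly the displayed expression, establishing (\ref{ststate}) for $x_1$-simple regions.

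To reach a general region $A$ enclosed by a smooth closed curve, I would cut $A$ into finitely many $x_1$-simple pieces $A_1, \ldots, A_m$ by inserting auxiliary curves in the interior, then sum (\ref{ststate}) over the pieces. The area integrals add to $\int_A \partial_1 f \, dx_1 dx_2$, while every interior cut lies in the boundaries of exactly two adjacent pieces with opposite orientations, so the corresponding line integrals cancel; only the outer boundary $\partial A$ survives. This gives (\ref{ststate}) in full.

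The main obstacle is precisely this decomposition step. Justifying that a region bounded by an arbitrary smooth closed curve can be partitioned into finitely many simple pieces, and that the cancellation along the interior cuts is legitimate, requires genuine care with the geometry of the boundary; in full rigor one must either impose mild regularity on $\partial A$ or approximate $A$ by polygonal or simple regions and pass to a limit. Since this appendix explicitly suppresses technical assumptions and aims only for intuition, I would present the $x_1$-simple computation in detail and treat the decomposition informally, highlighting the cancellation of interior cuts as the essential mechanism.
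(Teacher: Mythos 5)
Your outline is correct, but note that the paper does not actually prove this statement: Theorem~\ref{st} appears in an expository appendix as a review of a classical result, with the proof deferred to the cited references (Chapter 11 of \cite{Apostol} and \cite{Warner}) and only the one-line remark that ``its proof is also ultimately an application of the fundamental theorem of calculus.'' Your argument is the standard textbook proof that those references carry out: verify the identity on $x_1$-simple regions via Fubini's theorem and the fundamental theorem of calculus, check that the horizontal boundary pieces contribute nothing to $\int_{\partial A} f\,dx_2$, and then decompose a general region into simple pieces with cancellation along interior cuts. Your computation on simple regions is correct (with the counterclockwise orientation, the right-hand graph $x_1=\beta(x_2)$ is traversed with $x_2$ increasing and the left-hand graph with $x_2$ decreasing, yielding exactly $\int_c^d\bigl(f(\beta(x_2),x_2)-f(\alpha(x_2),x_2)\bigr)\,dx_2$), and you rightly identify the decomposition of an arbitrary region bounded by a smooth closed curve as the genuinely delicate step. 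It is worth observing that for the purposes of this paper the delicate step is never needed: in the proof of Theorem~\ref{converse}, Stokes' theorem is applied only to triangles and axis-parallel rectangles, all of which are already $x_1$-simple, so your first step alone suffices for every application in the text.
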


Let us explain intuitively the meaning of Theorem~\ref{st}.  It relates the path integral of the $1$-dimensional differential form $f dx_1$ along the boundary $\partial A$ of $A$ to the double integral of its exterior derivative $d(fdx_1) = \partial_2 f dx_1 dx_2$ on the interior of $A$.  We may visualize this in Figure~\ref{figstokes} below.

\begin{figure}[!ht]
\includegraphics[height=1.5in]{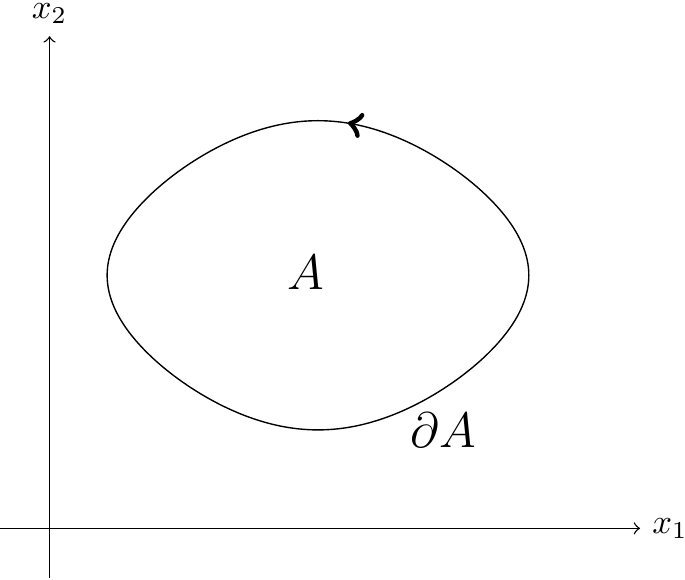}
\caption{A region $A$ and its boundary $\partial A$ in Stokes' theorem.} 
\label{figstokes}
\end{figure}

It may be instructive to consider an analogy between Stokes' theorem and the fundamental theorem of calculus (which is actually Stokes' theorem in dimension $1$). For a differentiable function $F$, the fundamental theorem of calculus relates the integral of $F'(x)$ along an interval to the difference in values of $F$ on the endpoints of this interval.  That is, it states that
\[
F(b) - F(a) = \int_a^b F'(x) dx.
\]
Stokes' theorem generalizes the fundamental theorem of calculus in the sense that it replaces the concept of an interval with a simple region, and the endpoints of the interval (which form its boundary) with the closed curve that forms the boundary of the region. Its proof is also ultimately an application of the fundamental theorem of calculus. We refer the interested reader to Chapter 11 of \cite{Apostol} or to \cite{Warner} for more detailed expositions of Stokes' theorem, which also appears in various engineering applications such as electrostatics and fluid dynamics. 

In this paper, Stokes' theorem is particularly convenient because it allows manipulation of line integrals of $1$-dimensional differential forms.  We see that the attributions given by path attribution methods take exactly this form for differential forms involving the characteristic function. Applying Stokes' theorem now yields conditions on the area integral of a mixed partial which we use as a starting point for further considerations.

%In this paper, we require a slightly more general version of Stokes' theorem, stated below, which is proved by repeated application of Theorem~\ref{st}.

%\begin{theorem}[Stokes' Theorem for non-simple curves]
%Let $\gamma: S^1 \to \RR^2$ be a smooth closed curve (which may self-intersect).  Let $\gamma$ divide $\RR^2 - \gamma(S^1)$ into open regions $A_i$ such that the winding number of $\gamma$ about each point in $A_i$ is $i$.  Then, for any differentiable function $f$ on $\RR^2$, we have
%\[
%\int_{\gamma} f dx_2 = \sum_{i \in \ZZ} i \int_{A_i} \partial_1f dx_1 dx_2.
%\]
%\end{theorem}

\section{Technical results on multilinear functions}

In this appendix we state and prove some technical results about multilinear functions which are used in our proofs.  We begin with an alternate characterization of functions which are the sum of a multilinear function and an additively separable function.

\begin{lemma} \label{char}
A function $f: \RR^n \to \RR$ is the sum of a multilinear function and an additively separable function if and only if $\partial_{iij} f = 0$ for all $i \neq j$.
\end{lemma}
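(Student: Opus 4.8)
The plan is to prove both implications, with the forward direction being a routine computation and the reverse direction carrying the real content. For the forward direction, suppose $f = m + a$ where $m$ is multilinear and $a = f_1(r_1) + \cdots + f_n(r_n)$ is additively separable. Since $m$ has degree at most one in each variable, $\partial_{ii} m = 0$, and commuting derivatives via Young's theorem gives $\partial_{iij} m = \partial_j \partial_{ii} m = 0$. For the separable part, $\partial_j a = f_j'(r_j)$ depends only on $r_j$, so for $i \neq j$ we have $\partial_{iij} a = \partial_{ii}\big(f_j'(r_j)\big) = 0$. Adding these yields $\partial_{iij} f = 0$ for all distinct $i, j$, which uses only the $\partial_{iij}$ form of the hypothesis.

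For the reverse direction, the key observation is that specializing the hypothesis $\partial_{iij} f = 0$ to the case $i = k$ shows $\partial_{kkj} f = 0$ for every $j \neq k$, i.e. $\partial_j(\partial_{kk} f) = 0$ for all $j \neq k$. Hence $\partial_{kk} f$ depends only on the variable $r_k$, and I may regard it as a single-variable function $\phi_k(r_k) := \partial_{kk} f$ (its value being unchanged by fixing the remaining coordinates). I would then define $a_k(r_k)$ to be a function of one variable whose second derivative is $\phi_k$, obtained by integrating $\phi_k$ twice; this exists by the smoothness assumed throughout the paper. Setting $a := \sum_k a_k$ and $m := f - a$, we have $\partial_{kk} a_j = 0$ for $j \neq k$ while $\partial_{kk} a_k = \phi_k = \partial_{kk} f$, so $\partial_{kk} m = 0$ for every $k$.

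It then remains to show that a function $m$ with $\partial_{kk} m = 0$ for all $k$ is multilinear, which I expect to be the main (if still modest) obstacle, since it must produce the polynomial form with \emph{constant} coefficients demanded by Definition~\ref{functype} rather than merely yielding a function affine in each variable separately. I would argue this by induction on $n$: the vanishing of $\partial_{11} m$ lets me write $m = A(r_2, \ldots, r_n)\, r_1 + B(r_2, \ldots, r_n)$, and then $\partial_{kk} m = 0$ for $k \geq 2$ forces $\partial_{kk} A = \partial_{kk} B = 0$ by comparing the (independent) coefficients of $r_1$, so the inductive hypothesis makes $A$ and $B$ multilinear with constant coefficients and hence $m$ multilinear. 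Since $a = \sum_k a_k$ is additively separable by construction, this exhibits $f = m + a$ in the desired form and completes the proof.
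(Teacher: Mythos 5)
Your proof is correct. It rests on the same key observation as the paper's proof --- that the hypothesis forces $\partial_{kk}f$ to depend on $r_k$ alone, so it can be integrated away --- but you organize the argument differently. The paper runs a single induction on $n$ that peels off one variable at a time, writing $f = \int_0^{r_1}\!\int_0^{t_2} g_1 + r_1 h + p$ and recursing on $p$ while separately arguing that $h$ is multilinear; you instead subtract all the additively separable pieces $a_k$ in one global step and reduce to the clean sub-lemma that $\partial_{kk}m = 0$ for all $k$ implies $m$ is multilinear, which you then prove by its own induction. Your version has two modest advantages: it avoids the bookkeeping of the mixed term $r_1 h(r)$ and the specialization $r_1 = 0$ needed to disentangle $h$ from $p$, and it makes explicit the step that the paper leaves implicit in the phrase ``so $h$ is multilinear'' --- namely, that vanishing of all pure second partials yields a genuine polynomial of the form in Definition~\ref{functype} with \emph{constant} coefficients, not merely a function affine in each variable separately. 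The paper's variable-at-a-time structure, on the other hand, keeps everything inside one induction. Both arguments use the same smoothness hypotheses (Young's theorem and the ability to antidifferentiate), so neither is more general than the other.
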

\begin{proof}
It is obvious that the sum of a multilinear function and an additively separable function has this property, so it remains to show the converse. We proceed by induction on $n$, with the base case $n = 1$ trivial.  Now, if $n > 1$, we may write $\partial_{11} f = g_1$ as a function of $q_1$ only, hence we see that 
\[
\partial_1 f(r) = \int_0^{r_1} g_1(t) dt + h(r)
\]
and
\[
f(r) = \int_0^{r_1} \int_0^{t_2} g_1(t_1) dt_1 dt_2 + r_1 h(r) + p(r),
\]
where $\partial_1h = \partial_1p = 0$. It remains to show that $h$ is multilinear and that $p$ is the sum of a multilinear function and an additively separable function.  Now, for any distinct $i, j \neq 1$, we have that 
\[
0 = \partial_{iij} f = r_1 \partial_{iij} h + \partial_{iij} p,
\]
so taking $r_1 = 0$ shows that $\partial_{iij} p = 0$.  Hence $p$ is the sum of a multilinear function and an additively separable function by the inductive hypothesis.  Now, notice that for $i \neq 1$, we have
\[
0 = \partial_{ii1} f = \partial_{ii} h,
\]
so $h$ is multilinear.  This completes the induction.
\end{proof}

\begin{remark}
The condition in Lemma~\ref{char} is a mixture of the conditions for $f$ to be multilinear ($\partial_{ii} f = 0$) and to be additively separable ($\partial_{ij} = 0$).
\end{remark}

Now, fix $r, s \in \RR^n$.  Our next two results provide alternate bases for the space of multilinear functions which will be convenient in analyzing their restriction to the vertices of $[r, s]$.  For $I \subset [n]$, define the points $u^I$ and $v^I$ by 
\[
u_i^I = \begin{cases} r_i & i \notin I \\ s_i & i \in I \end{cases} \text{ and } v_i^I = \begin{cases} s_i & i \notin I \\ r_i & i \in I. \end{cases}
\]
Notice that $u^I = v^{[n] - I}$ and that the vertices of $[r, s]$ are exactly the points $u^I$ as $I$ ranges over the subsets of $[n]$.  We then have the following two characterizations of multilinear functions.

\begin{lemma} \label{mldelta}
For any $r,s \in \RR^n$ with $r_i \neq s_i$ for all $i$, there exists for any $I \subset [n]$ a multilinear function $g_I$ such that $g_I(u^J) = \delta_{I, J}$.
\end{lemma}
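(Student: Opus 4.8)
The plan is to build $g_I$ by a tensor-product (Lagrange interpolation) construction, one linear factor per coordinate. The key observation is that the vertices $u^J$ of the box $[r,s]$ are exactly the points whose $i^{\text{th}}$ coordinate is $s_i$ when $i \in J$ and $r_i$ when $i \notin J$. So to pick out the single vertex $u^I$, I want a function that, coordinate by coordinate, ``tests'' whether $x_i$ sits at $s_i$ (for $i \in I$) or at $r_i$ (for $i \notin I$), and multiplies these tests together.

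Concretely, since $r_i \neq s_i$ the two affine functions
\[
\ell_i^1(x_i) = \frac{x_i - r_i}{s_i - r_i}, \qquad \ell_i^0(x_i) = \frac{s_i - x_i}{s_i - r_i} = 1 - \ell_i^1(x_i)
\]
are well-defined; $\ell_i^1$ vanishes at $r_i$ and equals $1$ at $s_i$, while $\ell_i^0$ does the reverse. I would then set
\[
g_I(x) = \prod_{i \in I} \ell_i^1(x_i) \cdot \prod_{i \notin I} \ell_i^0(x_i).
\]
Each factor is linear in a distinct variable $x_i$, so the product has degree at most $1$ in each variable and is therefore multilinear, as required by Definition~\ref{functype}.

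It remains to verify the interpolation property $g_I(u^J) = \delta_{I,J}$. Evaluating factor by factor: for $i \in I$ the factor $\ell_i^1(u_i^J)$ equals $1$ exactly when $i \in J$ and $0$ otherwise, while for $i \notin I$ the factor $\ell_i^0(u_i^J)$ equals $1$ exactly when $i \notin J$ and $0$ otherwise. Hence the product is $1$ precisely when $J$ agrees with $I$ on every coordinate, i.e. $J = I$, and is $0$ whenever some coordinate disagrees. This is exactly $g_I(u^J) = \delta_{I,J}$.

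There is no substantive obstacle here: the only place the hypothesis is used is in forming $\ell_i^0, \ell_i^1$, where the assumption $r_i \neq s_i$ guarantees the denominators $s_i - r_i$ are nonzero (equivalently, that $[r,s]$ is genuinely $n$-dimensional and its $2^n$ vertices are distinct). The construction is the Boolean-cube analogue of Lagrange interpolation, and the ``hard part'' is merely bookkeeping the case analysis in the vertex evaluation, which I would present compactly rather than enumerating all subsets.
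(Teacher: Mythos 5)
Your proof is correct and is essentially the same construction as the paper's: the paper defines $g_I(x) = \prod_i (x_i - v_i^I)/(u_i^I - v_i^I)$, which coordinate-by-coordinate is exactly your $\prod_{i \in I}\ell_i^1 \cdot \prod_{i \notin I}\ell_i^0$. In fact your version is the more carefully stated one, since the paper's displayed formula takes the product only over $i \in I$ (apparently a typo, as that would only give $g_I(u^J)=0$ for $I \not\subset J$), whereas your inclusion of the factors $\ell_i^0$ for $i \notin I$ is what actually yields $g_I(u^J)=\delta_{I,J}$.
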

\begin{proof}
Define $g_I$ by 
\[
g_I(x) = \frac{\prod_{i \in I} (x_i - v_i^I)}{\prod_{i \in I} (u_i^I - v_i^I)}.
\]
For any $I \neq J$, there is some $i \in K$ such that $u_i^I = v_i^J$, meaning that $g_I(u^J) = 0$ for $I \neq J$.  But $g_I(u^I) = 1$ by definition, so this $g_I$ has the desired properties.
\end{proof}

\begin{lemma} \label{mlbasis}
For any $r, s \in \RR^n$ and any $x_j$, there is a basis $\{f_\alpha\}$ of the space of multilinear functions such that $f_{\alpha}$ is non-decreasing in $x_j$ on $[r, s]$.
\end{lemma}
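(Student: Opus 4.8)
The plan is to peel off the distinguished variable $x_j$ and reduce the problem to finding a basis of non-negative functions in the remaining variables. Let $V$ denote the space of multilinear functions in $x_1, \ldots, x_n$ and let $W \subset V$ be the subspace of multilinear functions that do not involve $x_j$. Every $f \in V$ decomposes uniquely as $f = A + x_j B$ with $A, B \in W$, so $V = W \oplus x_j W$ and $\dim V = 2 \dim W = 2^n$. The key observation is that a function $g \in V$ is non-decreasing in $x_j$ on $[r,s]$ exactly when $\partial_j g \geq 0$ there: for $g \in W$ this is automatic since $\partial_j g = 0$, while for $g = x_j B$ with $B \in W$ it reads $\partial_j g = B \geq 0$. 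Hence it suffices to produce a basis $\{B_\beta\}$ of $W$ whose members are non-negative on $[r,s]$; then $\{B_\beta\} \cup \{x_j B_\beta\}$ has $2^n$ elements, spans $V$ (as $\{B_\beta\}$ spans $W$ and multiplication by $x_j$ carries it to a basis of $x_j W$), and each member is non-decreasing in $x_j$ on $[r,s]$.

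To construct the non-negative basis of $W$, I would first treat the case $r_i \neq s_i$ for all $i \neq j$. Applying Lemma~\ref{mldelta} to the $n-1$ variables other than $x_j$ furnishes multilinear functions $\{g_I\}_{I \subset [n] \setminus \{j\}}$ with $g_I(u^J) = \delta_{I,J}$ at the vertices of the box. These are linearly independent (evaluating $\sum_I c_I g_I$ at $u^J$ returns $c_J$) and number $2^{n-1} = \dim W$, so they form a basis of $W$. Each $g_I$ is also non-negative on $[r,s]$: it equals $\delta_{I,J} \geq 0$ at every vertex $u^J$, and a multilinear function that is non-negative at all vertices of a box is non-negative throughout it, since fixing all but one coordinate makes it affine and hence minimized at an endpoint in that coordinate. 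Setting $B_\beta = g_I$ completes this case.

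It remains to discharge the non-degeneracy assumption, which is where the only genuine care is needed. If $r_i = s_i$ for some $i \neq j$, I would enlarge $s$ to $\tilde s$ by setting $\tilde s_i = s_i$ when $s_i \neq r_i$ and $\tilde s_i = r_i + 1$ otherwise. Then $[r,s] \subseteq [r,\tilde s]$ and $[r,\tilde s]$ is non-degenerate in every coordinate $\neq j$, so the previous paragraph yields a basis of $W$ whose members are non-negative on $[r,\tilde s]$, and a fortiori on $[r,s]$. Feeding this basis into the decomposition step finishes the proof in general.

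The remaining verifications—that the union is genuinely a basis and that the sign conditions hold—are routine bookkeeping. The one substantive input is the non-negativity claim, which rests on the extremum-at-vertices property of multilinear functions combined with Lemma~\ref{mldelta}; beyond the degenerate-box adjustment above, I anticipate no serious obstacle.
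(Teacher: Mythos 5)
Your proof is correct and follows essentially the same route as the paper's: both rest on the vertex-indicator functions of Lemma~\ref{mldelta}, the fact that a multilinear function non-negative at all vertices of a box is non-negative on the box, and enlargement of the box to handle degenerate coordinates. The only cosmetic difference is that you build the basis as $\{B_\beta\}\cup\{x_j B_\beta\}$ from $(n-1)$-variable indicators via the decomposition $V = W \oplus x_j W$, whereas the paper takes the pairs $g_{J\cup\{j\}}$ and $g_{J\cup\{j\}}+g_J$ of $n$-variable indicators, which span the same two-dimensional pieces.
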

\begin{proof}
It suffices to consider the case where $r_i \neq s_i$ for all $i$, as otherwise we may pick $r', s'$ with $[r, s] \subset [r', s']$ and $r_i' \neq s_i'$ for all $i$.  Further, we may assume that $r_i \leq s_i$ for all $i$, as otherwise we may simply exchange $r_i$ and $s_i$. Now, for $J \subset [n] - \{j\}$, consider the multilinear functions
\[
g_{J \cup \{j\}} \text{ and } g_{J \cup \{j\}} + g_J
\]
given by Lemma \ref{mldelta}.  It is clear that these form a basis for the space of all multilinear functions because $\{g_I\}_{I \subset [n]}$ does.  Further, notice that 
\[
\partial_j g_{J \cup \{j\}} \text{ and } \partial_j (g_{J \cup \{j\}} + g_J)
\]
are multilinear functions in $x_1, \ldots, \widehat{x}_j, \ldots, x_n$ which are non-negative on all vertices of $[r, s]$, hence non-negative on $[r, s]$.  Therefore, they give the desired basis for the space of multilinear functions consisting of functions non-decreasing in $x_j$ on $[r, s]$.  
\end{proof}

The existence of the basis of Lemma \ref{mlbasis} allows us to convert \ADD to linearity as follows.

\begin{lemma} \label{cauchy}
Fix $r, s\in \RR^n$ and a variable $x_j$, and let $\phi$ be an additive functional on the space of multilinear functions.  If $\phi(f) \geq 0$ for $f$ non-decreasing in $x_j$ on $[r, s]$, then $\phi$ is linear.
\end{lemma}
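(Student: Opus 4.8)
The plan is to deduce full $\RR$-homogeneity of $\phi$ from additivity together with the non-negativity hypothesis, the key tool being the basis furnished by Lemma~\ref{mlbasis}. Additivity of $\phi$ already gives for free that $\phi(0) = 0$, that $\phi(-f) = -\phi(f)$, and (by the usual induction on Cauchy's functional equation) that $\phi(qf) = q\,\phi(f)$ for every rational $q$. What remains is to upgrade this $\mathbb{Q}$-homogeneity to genuine $\RR$-homogeneity; once we have $\phi(cf) = c\,\phi(f)$ for all real $c$, linearity of $\phi$ follows by combining homogeneity with additivity.

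First I would fix the basis $\{f_\alpha\}$ of the space of multilinear functions provided by Lemma~\ref{mlbasis}, each element of which is non-decreasing in $x_j$ on $[r, s]$. Writing an arbitrary multilinear $f = \sum_\alpha c_\alpha f_\alpha$ and applying additivity gives $\phi(f) = \sum_\alpha \phi(c_\alpha f_\alpha)$, so it suffices to prove that $\phi(c h) = c\,\phi(h)$ for every real $c$ and every basis element $h = f_\alpha$. Linearity of $\phi$ then drops out mechanically by expanding any two multilinear functions in this basis and using additivity termwise.

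The heart of the argument is therefore the scalar function $\psi(c) := \phi(c h)$ for a fixed $h$ that is non-decreasing in $x_j$ on $[r, s]$. It is additive in $c$, hence $\mathbb{Q}$-linear with $\psi(q) = q\,\phi(h)$. The non-negativity hypothesis enters as follows: for $c \geq 0$ the function $c h$ is again non-decreasing in $x_j$ on $[r, s]$, so $\psi(c) = \phi(c h) \geq 0$; applying this to differences shows $\psi(c_2) - \psi(c_1) = \phi\big((c_2 - c_1)h\big) \geq 0$ whenever $0 \leq c_1 \leq c_2$, i.e. $\psi$ is non-decreasing on $[0, \infty)$. For a real $c > 0$, choosing rationals $q_1 \leq c \leq q_2$ and squeezing with $q_1\,\phi(h) = \psi(q_1) \leq \psi(c) \leq \psi(q_2) = q_2\,\phi(h)$ forces $\psi(c) = c\,\phi(h)$ as $q_1, q_2 \to c$; the case $c < 0$ then follows from $\psi(-c) = -\psi(c)$.

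The only genuine obstacle is precisely this last step, namely passing from $\mathbb{Q}$-homogeneity (which additivity alone never improves, since pathological additive functionals built from a Hamel basis exist) to $\RR$-homogeneity. The monotonicity supplied by the hypothesis $\phi(f) \geq 0$ on the cone of functions non-decreasing in $x_j$ is exactly the regularity needed to exclude such pathologies, and the role of Lemma~\ref{mlbasis} is to guarantee that this cone is large enough to contain an entire basis, so that control over non-decreasing functions yields control over $\phi$ everywhere.
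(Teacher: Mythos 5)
Your proposal is correct and follows essentially the same route as the paper: both decompose $\phi$ over the basis of Lemma~\ref{mlbasis} and reduce to the fact that a monotone additive function on $\RR$ is linear (the paper cites this as the monotone solution to the Cauchy functional equation, whereas you write out the rational-squeeze argument explicitly). No gaps.
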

\begin{proof}
Let $\{f_\alpha\}$ be the basis given by Lemma \ref{mlbasis}.  By additivity, it suffices to check that $\phi$ is linear on $\spann(f_\alpha)$ for each $\alpha$.  But $\phi(c f_\alpha) \geq 0$ for any $c \geq 0$, hence $\phi$ is additive and non-decreasing on $\spann(f_\alpha)$.  It is therefore linear on $\spann(f_\alpha)$ as a monotone solution to the Cauchy functional equation (see \cite{AE} or the original paper of \cite{Dar}).
\end{proof}

\section{Proof of Theorem \ref{AS-SS-agree}} \label{pappend}

In this appendix, we give a computational proof of Theorem \ref{AS-SS-agree}, which was omitted from the main text to streamline the exposition.  First we need a technical lemma.

\begin{lemma} \label{binom-int}
For non-negative integers $i, j$, we have 
\[
\int_0^1 x^i (1-x)^j dx = \frac{1}{(i + j + 1) \binom{i+j}{i}} dx.
\]
\end{lemma}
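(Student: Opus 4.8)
The plan is to recognize this as the Euler Beta integral $B(i+1, j+1)$ and to establish the closed form by a single integration by parts that lowers the exponent on $(1-x)$, iterated via a short induction. Writing $I(i, j) = \int_0^1 x^i (1-x)^j\, dx$, the target value $\frac{1}{(i+j+1)\binom{i+j}{i}}$ equals $\frac{i!\,j!}{(i+j+1)!}$, so it suffices to produce this factorial expression.

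First I would dispose of the base case $j = 0$, where $I(i, 0) = \int_0^1 x^i\, dx = \frac{1}{i+1}$, which matches the claim since $\binom{i}{i} = 1$. Next I would derive the reduction recursion. Integrating by parts with $u = (1-x)^j$ and $dv = x^i\, dx$, so that $v = \frac{x^{i+1}}{i+1}$ and $du = -j(1-x)^{j-1}\, dx$, the boundary term $\left[\frac{x^{i+1}}{i+1}(1-x)^j\right]_0^1$ vanishes for $i \geq 0$ and $j \geq 1$, leaving
\[
I(i, j) = \frac{j}{i+1}\, I(i+1, j-1).
\]

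Iterating this identity $j$ times collapses the $(1-x)$ factor entirely, giving
\[
I(i, j) = \frac{j!}{(i+1)(i+2)\cdots(i+j)}\, I(i+j, 0) = \frac{j!}{(i+1)(i+2)\cdots(i+j)} \cdot \frac{1}{i+j+1} = \frac{i!\,j!}{(i+j+1)!},
\]
which is exactly $\frac{1}{(i+j+1)\binom{i+j}{i}}$. This computation presents no genuine obstacle, being the textbook evaluation of the Beta function; the only points demanding minor care are verifying that the boundary term in the integration by parts really does vanish (which uses precisely $i \geq 0$ and $j \geq 1$, both guaranteed whenever the recursion step is applied) and correctly telescoping the product $(i+1)(i+2)\cdots(i+j)$ into $\frac{(i+j)!}{i!}$ to reach the factorial form.
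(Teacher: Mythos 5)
Your proof is correct and takes essentially the same route as the paper's: a single integration by parts yielding the recursion $I(i,j) = \frac{j}{i+1}I(i+1,j-1)$, which is exactly the paper's identity $I(i,j) = \frac{i}{j+1}I(i-1,j+1)$ read in the opposite direction (the paper inducts on $i$, you telescope on $j$, and the two are mirror images under $x \mapsto 1-x$). The boundary-term check and the factorial bookkeeping are both handled correctly.
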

\begin{proof}
We induct on $i$.  For $i = 0$, the result is clear.  Now, suppose that the result holds for some $i-1$.  In this case, integration by parts gives that
\begin{align*}
\int_0^1 x^i (1 - x)^j dx &= \left[\frac{1}{j+1} x^i (1 - x)^{j+1}\right]_0^1 + \int_0^1 i x^{i-1} \frac{1}{j+1} (1 - x)^{j+1} dx \\
&= \frac{i}{j+1} \int_0^1 x^{i-1} (1-x)^{j+1} dx \\
&= \frac{i}{j+1} \frac{1}{(i + j + 1) \binom{i+j}{i-1}}\\
&= \frac{1}{(i + j + 1) \binom{i+j}{i}}
\end{align*}
which completes the proof.
\end{proof}

\begin{proof}[Proof of Theorem \ref{AS-SS-agree}]
By \ADD, it is enough to consider $f(r) = r_{i_1} r_{i_2}\cdots r_{i_k}$, since Lemma~\ref{addsepuni} shows that the two methods agree for additively separable functions.  Further, if $f(r)$ does not depend on the value of $r_i$, then the attribution to variable $r_i$ is $0$ by \DUM, so in fact it is enough to consider $f(r) = r_1 \cdot \cdots \cdot r_n$. 

In this case, recall that the Aumann-Shapley method is the affine path attribution method for $\gamma_i(t) = t$, so the attributions are given by 
\begin{align*}
z_i^{AS}(r, s) &= \int_0^1 \partial_if(\gamma_{r, s}(t)) \gamma_{r, s, i}'(t) dt \\
    &= \int_0^1 (s_i - r_i) \prod_{j \neq i} \left[r_j + (s_j - r_j) \gamma_j(t)\right] \gamma_i'(t) dt\\
    &=(s_i - r_i) \int_0^1 \gamma_i'(t) \sum_{K \subset [n] - \{i\}} \prod_{j \in K} r_j \prod_{j \in [n] - \{i\} - K} (s_j - r_j) \gamma_j(t) dt\\
    &=(s_i - r_i) \int_0^1 \gamma_i'(t)\!\!\!\! \sum_{J \subset [n]-\{i\}} \prod_{j \in J} s_j\!\!\!\! \prod_{j \in [n] - J - \{i\}}\!\! r_j\!\! \sum_{J \subset K \subset [n] - \{i\}} \!\!\!\! (-1)^{|K| - |J|} \prod_{j \in K} \gamma_j(t) dt \\
    &=(s_i - r_i) \sum_{J \subset [n] - \{i\}} \prod_{j \in J} s_j \prod_{j \in [n] - J - \{i\}} r_j \int_0^1 \gamma_i'(t) \prod_{j \in J} \gamma_j(t) \!\!\!\!\!\! \prod_{j \in [n] - J - \{i\}} \!\!\!\! (1 - \gamma_j(t)) dt,
\end{align*}
which is of the form
\[
z_i^{AS}(r, s) = (s_i - r_i) \sum_{J \subset [n] - \{i\}} c_{i, J} \prod_{j \in J} s_j \prod_{j \in [n] - J - \{i\}} r_j
\]
for the constants 
\[
c_{i, J} = \int_0^1 \gamma'_i(t) \prod_{j \in J} \gamma_j(t) \prod_{j \in [n] - J - \{i\}} (1 - \gamma_j(t)) dt = \int_0^1 t^{|J|} (1 - t)^{n - 1 - |J|} dt.
\]

On the other hand, each affine path attribution method for $\gamma^\sigma$ assigns to variable $i$ the attribution 
\[
z_i^{\sigma}(r, s) = (s_i - r_i) \prod_{\sigma(j) < \sigma(i)} r_j \prod_{\sigma(j) > \sigma(i)} s_j.
\]
Therefore, the attribution assigned to variable $i$ under Shapley-Shubik is 
\begin{align*}
z_i^{SS} = \frac{1}{n!} (s_i - r_i) \sum_{\sigma \in S_n}  \prod_{\sigma(j) < \sigma(i)} r_j \prod_{\sigma(j) > \sigma(i)} s_j = \frac{1}{n!} (s_i - r_i) \sum_{J \subset [n] - \{i\}} |J|! (n - 1 - |J|)! \prod_{j \in J} s_J \prod_{j \in [n] - \{i\} - J} r_j,
\end{align*}
so it suffices for us to show that 
\begin{equation*} 
\int_0^1 t^{|J|} (1 - t)^{n - 1 - |J|} dt = \frac{|J|! (n - 1 - |J|)!}{n!},
\end{equation*}
which follows by taking $i = |J|$ and $j = n - 1 - |J|$ in Lemma~\ref{binom-int}.
\end{proof}

\section{Affine path attribution methods} \label{affine-path-methods}

%% For any attribution method, we would like intuitively that the attributions remain stable when the input changes by a small amount.  For instance, we would like the attributions to be robust to small errors in measurement of the inputs.  In the case of path attribution methods, this would follow naturally from a continuity condition on the family of paths $(r, s) \mapsto \gamma_{r, s}$.  Indeed, we may identify the following class of methods which automatically give this desired continuity.\footnote{Of course, there may be other classes of methods which also yield continuity.}

The Aumann-Shapley and the Shapley-Shubik methods are both affine path attribution methods. The following lemma demonstrates why they satisfy \ASI.

\begin{lemma} \label{affineasi}
Every affine path attribution method satisfies \ASI.
\end{lemma}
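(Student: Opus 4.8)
The plan is to reduce to the case of a single affine single-path method and then track the effect of the affine change of coordinates directly through the defining integral (\ref{speq}). Since an affine path attribution method is by definition a convex combination $z_i = \sum_k a_k z^k_i$ of affine single-path methods with $\sum_k a_k = 1$, and since \ASI asserts an \emph{equality} of attributions, it is preserved under convex combinations: if each $z^k$ satisfies the required equality, summing against the weights $a_k$ shows $z$ does too. So I would first fix a single base path $\gamma:[0,1]\to[0,1]^n$ as in Definition~\ref{scaled-path-attribution} and verify \ASI for the corresponding affine single-path method.

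Next I would set up the transformation cleanly. Fix $c,d>0$ and a coordinate $j$, and let $T:\RR^n\to\RR^n$ be the affine map acting as $x_j\mapsto cx_j+d$ on coordinate $j$ and as the identity on all other coordinates, so that $g=f\circ T^{-1}$, with $r'=T(r)$ and $s'=T(s)$ the transformed points appearing in the statement of \ASI. The key observation is that the affine single-path $\gamma_{r',s'}$ for $g$ on $(r',s')$ is just the image of the original path under $T$: because $s'_j-r'_j=c(s_j-r_j)$ while $s'_i-r'_i=s_i-r_i$ for $i\neq j$, substituting into the formula of Definition~\ref{scaled-path-attribution} gives $\gamma_{r',s',j}(t)=c\,\gamma_{r,s,j}(t)+d$ and $\gamma_{r',s',i}(t)=\gamma_{r,s,i}(t)$ for $i\neq j$; that is, $\gamma_{r',s'}=T\circ\gamma_{r,s}$, whence $T^{-1}(\gamma_{r',s'}(t))=\gamma_{r,s}(t)$. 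Since $c>0$, the transformed path retains the correct monotonicity in each coordinate, so the method is well-defined on $(r',s')$.

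It then remains to substitute into (\ref{speq}). Because $T^{-1}$ is diagonal and affine, the chain rule gives $\partial_i g(x)=\partial_i f(T^{-1}(x))$ for $i\neq j$ and $\partial_j g(x)=\tfrac{1}{c}\,\partial_j f(T^{-1}(x))$, while differentiating the path components yields $\gamma_{r',s',i}'(t)=\gamma_{r,s,i}'(t)$ for $i\neq j$ and $\gamma_{r',s',j}'(t)=c\,\gamma_{r,s,j}'(t)$. For $i\neq j$ these substitutions immediately give $z_i(r',s',g)=\int_0^1 \partial_i f(\gamma_{r,s}(t))\,\gamma_{r,s,i}'(t)\,dt=z_i(r,s,f)$, while for $i=j$ the factor $\tfrac{1}{c}$ coming from $\partial_j g$ cancels exactly against the factor $c$ coming from $\gamma_{r',s',j}'$, again producing $z_j(r,s,f)$. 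This establishes \ASI for a single affine single-path method, and hence for all affine path methods by the opening reduction.

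I expect no genuine obstacle here: the only thing to watch is the bookkeeping of the scaling factor $c$, which enters in two separate places — once as $1/c$ through the chain rule applied to $g=f\circ T^{-1}$, and once as $c$ through the reparametrized path derivative $\gamma_{r',s',j}'$. The entire content of the lemma is precisely that these two contributions cancel, so the main care required is simply keeping the $i=j$ and $i\neq j$ cases separate and confirming the cancellation.
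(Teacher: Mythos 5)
Your proposal is correct and follows essentially the same route as the paper: reduce to a single affine single-path method by convexity, observe that the transformed path is the affine image of the original one, and check that the factor $1/c$ from the chain rule applied to $g$ cancels the factor $c$ in the reparametrized path derivative. The paper packages the two cases $i=j$ and $i\neq j$ into a single factor $\tau_{ij}(c)$, but the computation is identical.
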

\begin{proof}
Let $z_i$ be the affine single-path attribution method corresponding to $\gamma$.  For any $c, d > 0$, set $g(r_1, \ldots, r_n) = f(r_1, \ldots, (r_j - d)/c, \ldots, r_n)$, $r' = (r_1, \ldots, cr_j + d, \ldots, r_n)$, and $s' = (s_1, \ldots, cs_j + d, \ldots, s_n)$.  Then, taking $\tau_{ij}(c) = c$ if $i = j$ and $\tau_{ij}(c) = 1$ otherwise, we have
\begin{align*}
z_i\Big(r', s', g\Big) &= \int_0^1 \partial_i g\Big(r' + \big((s_1 - r_1) \gamma_1(t), \ldots, c(s_j - r_j) \gamma_j(t), \ldots, (s_n - r_n) \gamma_n(t)\big)\Big) (s_i - r_i) \tau_{ij}(c) \gamma_i'(t)\, dt  \\
	&= \int_0^1 \frac{1}{\tau_{ij}(c)}\, \partial_i f(\gamma_{r, s}(t))\, (s_i - r_i)\, \tau_{ij}(c)\, \gamma_i'(t)\, dt\\
	&= \int_0^1 \partial_i f(\gamma_{r, s}(t)) \gamma_{r, s, i}'(t)\,dt\\
	&= z_i(r, s, f).
\end{align*}
The result follows because \ASI is preserved under convex combinations.
\end{proof}

\end{document}